%

%
%
%
%

%
\RequirePackage{fix-cm}
\documentclass[smallextended]{svjour3}       
\smartqed  
\usepackage{graphicx}
\usepackage[T1]{fontenc}
\usepackage[ruled]{algorithm2e}

\SetAlFnt{\small}
\SetAlCapFnt{\small}
\SetAlCapNameFnt{\small}
\SetAlCapHSkip{0pt}
\IncMargin{-\parindent}

\usepackage{etex}
\usepackage{graphics}
\usepackage{epsfig,verbatim}
\usepackage{arcs}
\usepackage{fontenc}
\usepackage{todonotes}
\usepackage{alltt}
\usepackage{amssymb,amsmath,amsfonts,epsf,url}
\usepackage{graphicx}
\usepackage{pstricks,pstricks-add,xcolor}
\usepackage{pst-node}
\usepackage{pst-coil}
\usepackage{tikz}
\usetikzlibrary{arrows,positioning,shapes,fit,calc,snakes,decorations.pathmorphing}
\tikzset{snake it/.style={decorate, decoration=snake}}

\pgfdeclarelayer{background}
\pgfsetlayers{background,main}

\usepackage{subfigure}

 \newcommand{\RRR}{\mathcal{R}}
 
\newcommand{\VVV}{\mathcal{V}}
\newcommand{\Oh}{\mathcal{O}}
\newcommand{\Ostar}{\mathcal{O^*}}
\newcommand{\FPT}{\mathcal{FPT}}
\newcommand{\APX}{\mathcal{APX}}
\newcommand{\NP}{\mathcal{NP}}

\newcommand{\nat}{\mathbb{N}}

\newcommand{\algo}{\textbf{{\textsc{Flip Distance-Algorithm}}}}
\newcommand{\algocomb}{\textbf{{\textsc{Combinatorial Flip Distance-Algorithm}}}}
\newcommand{\rec}{\textbf{{\textsc{Recursive-Construction(${\cal S}$)}}}}
\newcommand{\paramproblem}[4]{\noindent {\sc #1}
\\
{\bf Given:} #2\\
{\bf Parameter:} #3\\
{\bf Question:} #4}

\def\eg{{\em e.g.}}

\def\ie{{\em i.e.}}

\def\CFD{{\sc Combinatorial Flip Distance}}
\def\FD{{\sc Flip Distance}}


\newlength{\alginputwidth}
\newlength{\algboxwidth}

\newcommand{\algtitle}[1]{\underline{{\bf #1}} \vspace*{1mm}\\}

\newsavebox{\algbox}
\newsavebox{\captionbox}
\newenvironment{algorithmnew}[2]%
    {
        \setlength{\algboxwidth}{\columnwidth}
        \addtolength{\algboxwidth}{-\columnsep}
        \addtolength{\algboxwidth}{-1mm}
        \setlength{\alginputwidth}{\algboxwidth}
        \addtolength{\alginputwidth}{-1.7cm}
        \begin{figure}[htbp]
            \vspace*{-1mm}
            \centering
            \begin{lrbox}{\captionbox}
                \begin{minipage}[b]{\algboxwidth}
                    \centering
                    \caption{#1}
                    \label{#2}
                \end{minipage}
            \end{lrbox}
            \begin{lrbox}{\algbox}
                \begin{minipage}[b]{\algboxwidth}
                    \footnotesize
                    \vspace*{2mm}
    } 
    {
                    \vspace*{0.2mm}
               \end{minipage}
            \end{lrbox}
            \fbox{\usebox{\algbox}\hspace*{1mm}}
            \usebox{\captionbox}
            \vspace*{-1mm}
      \end{figure}
    }
\newsavebox{\algcodebox}
\newenvironment{codeblock}%
    {
        \begin{enumerate}
            \setlength{\itemsep}{2pt}
            \setlength{\parsep}{0pt}
            \setlength{\topsep}{0pt}
            \setlength{\parskip}{0pt}
            \setlength{\partopsep}{0pt}
    } 
    {\end{enumerate}}
\newcommand{\step}{\item}

\begin{document}

\title{Computing the flip distance between triangulations\thanks{An extended abstract of this work (without the complete proofs and the extension to labeled triangulated graphs) appears in~\cite{kanjxiastacs}.}}


\author{Iyad Kanj \and Eric Sedgwick \and Ge Xia
}

\institute{I. Kanj (corresponding author) \at
              School of Computing, DePaul University, Chicago, USA. \\
              Tel.: +1312-362-5558\\
              Fax: +1312-362-6118\\
              \email{ikanj@cs.depaul.edu}           
           \and
           E. Sedgwick  \at
           School of Computing, DePaul University, Chicago, USA. \\
           Tel.: +1312-362-5558\\
           Fax: +1312-362-6118\\
           \email{esedgwick@cs.depaul.edu}
           \and
           G. Xia \at
           Department of Computer Science, Lafayette College, Easton, USA. \\
           Tel.: +1610-330-5415 \\
           Fax: +1610-330-5059 \\
           \email{xiag@lafayette.edu}
}

\date{ }

\maketitle

\begin{abstract}
Let ${\cal T}$ be a triangulation of a set ${\cal P}$ of $n$ points in the plane, and let $e$ be an edge shared by two triangles in ${\cal T}$ such that the quadrilateral $Q$ formed by these two triangles is convex. A {\em flip} of $e$ is the operation of replacing $e$ by the other diagonal of $Q$ to obtain a new triangulation of ${\cal P}$ from ${\cal T}$. The {\em flip distance} between two triangulations of ${\cal P}$ is the minimum number of flips needed to transform one triangulation into the other. The {\sc Flip Distance} problem asks if the flip distance between two given triangulations of ${\cal P}$ is at most $k$, for some given $k \in \nat$. It is a fundamental and a challenging problem.

We present an algorithm for the {\sc Flip Distance} problem that runs in time $\Oh(n + k \cdot c^{k})$, for a constant~$c \leq 2 \cdot 14^{11}$, which implies that the problem is fixed-parameter tractable. We extend our results to triangulations of polygonal regions with holes, and to labeled triangulated graphs.
\keywords{Flip distance \and Triangulations \and Triangulated graphs \and Parameterized complexity}
\end{abstract}
\section{Introduction} \label{sec:intro}
Let ${\cal P}$ be a set of $n$ points in the plane. A {\em triangulation} of ${\cal P}$ is a partitioning of the convex hull of ${\cal P}$ into triangles such that the set of vertices of the triangles in the triangulation is ${\cal P}$. Note that the convex hull of ${\cal P}$ may contain points of ${\cal P}$ in its interior.

A {\em flip} to an (interior) edge $e$ in a triangulation of ${\cal P}$ is the operation of replacing $e$ by the other diagonal of the quadrilateral formed by the two triangles that share $e$, provided that this quadrilateral is convex; otherwise, flipping $e$ is not admissible. The {\em flip distance} between two triangulations ${\cal T}_{initial}$ and ${\cal T}_{final}$ of ${\cal P}$ is the length of a shortest sequence of flips that transforms ${\cal T}_{initial}$ into ${\cal T}_{final}$. This distance is always well-defined and is $\Oh(|{\cal P}|^2)$ (\eg, see~\cite{urrutia}). The {\sc Flip Distance} problem is: Given two triangulation ${\cal T}_{initial}$ and ${\cal T}_{final}$ of ${\cal P}$, and $k \in \nat$, decide if the flip distance between ${\cal T}_{initial}$ and ${\cal T}_{final}$ is at most $k$.

Triangulations are a very important subject of study in computational geometry, and they have applications in computer graphics, visualization, and geometric design (see~\cite{triangulations,okabe,saalfeld,schumaker,watson}, to name a few). Flips in triangulations and the {\sc Flip Distance} problem have received a large share of attention (see~\cite{bosehurtado} for a review). The {\sc Flip Distance} problem is a very fundamental and challenging problem, and different aspects of this problem have been studied, including the combinatorial, geometrical, topological, and computational aspects~\cite{hurtadolower,mulzer,bosehurtado,cleary,hanke,urrutia,lawson,lubiw,pilz,sibson,sleator}. Lawson~\cite{lawson} studied flips in triangulations, and proved that any two triangulations of ${\cal P}$ can be transformed into one another by a finite sequence of flips; an analysis of Lawson's result~\cite{lawson} yields an $O(n^2)$ upper bound on the number of flips needed to transform one triangulation into another, and hence, on the flip distance between any two triangulations. We can define the triangulations graph of ${\cal P}$, whose vertex-set is the set of all triangulations of ${\cal P}$, and in which two triangulations/vertices are adjacent if and only if their distance is 1. Lawson's result~\cite{lawson} implies that the triangulations graph has diameter $\Oh(n^2)$. Moreover, it is known that the number of vertices in the triangulations graph is $\Omega(2.631^n)$~\cite{triangulationslowerbound}. Therefore, solving the {\sc Flip Distance} problem by finding a shortest path between the two triangulations in the triangulations graph is not feasible.

The complexity of the {\sc Flip Distance} problem was resolved very recently (2012) by Lubiw and Pathak~\cite{lubiw} who showed the problem to be $\NP$-complete. Simultaneously, and independently, the problem was shown to be $\APX$-hard by Pilz~\cite{pilz}. Very recently, Aichholzer et al.~\cite{mulzer} showed the problem to be $\NP$-complete for triangulations of a simple polygon. Resolving the complexity of the problem for the special case when ${\cal P}$ is in a convex position (\ie, triangulations of a convex polygon) is a long-standing open problem (see~\cite{sleator}); this problem is equivalent to the problem of computing the rotation distance between two rooted binary trees~\cite{cleary,sleator}. Cleary and St.~John~\cite{cleary} showed that this special case (convex polygon) is fixed-parameter tractable ($\FPT$): They gave a kernel of size $5k$ for the problem and presented an $\Oh^*((5k)^{k})$-time  $\FPT$ algorithm based on this kernel (the $\Ostar()$ notation suppresses polynomial factors in the input size). The upper bound on the kernel size for the convex case was subsequently improved to $2k$ by Lucas~\cite{lucas}, who also gave an $\Oh^*(k^{k})$-time $\FPT$ algorithm for this case. The kernelization approaches used in~\cite{cleary,lucas} for the convex case are not applicable to the general case. In particular, the reduction rules used in~\cite{cleary,lucas} to obtain a kernel for the convex case, and hence the $\FPT$ algorithms based on these kernels, do not generalize to the problems under consideration in this paper.

In this paper we present an $\Oh(n + k \cdot c^{k})$-time algorithm, where $c \leq 2 \cdot 14^{11}$, for the {\sc Flip Distance} problem for triangulations of an arbitrary point-set in the plane, which shows that the problem is $\FPT$.
Our result is a significant improvement over the $\Oh^*(k^k)$-time algorithm by Lucas~\cite{lucas} for the convex case, which is a special case of the {\sc Flip Distance} problem.
While it is not very difficult to show that the {\sc Flip Distance} problem is $\FPT$ based on some of the structural results in this paper, obtaining an $\Ostar(c^{k})$-time algorithm, for some constant $c$, is quite involved.

Our approach is as follows. For any solution to a given instance of the problem, we can define a directed acyclic graph (DAG), whose nodes are the flips in the solution, that captures the dependency relation among the flips. We show that any topological sorting of this DAG corresponds to a valid solution of the instance. The difficult part is how, without knowing the DAG, to navigate the triangulations and perform the flips in an order that corresponds to a topological sorting of the DAG. The key is to show that there exists a sequence of ``flip/move''-type local actions in the triangulations, where each local action has constantly-many choices, that corresponds to a topological sorting of the DAG associated with a solution to the instance, and such that the length of this sequence is linear in the number of nodes in the DAG. This enables us to present a search-tree algorithm that searches for such a sequence in $\Ostar(c^k)$ time. To achieve the above goal, we develop results that reveal some of the structural intricacies of this fundamental and challenging problem.

Our approach and techniques are very generic. They not only work seamlessly for other types of triangulations in the geometric setting, but they can also be adapted to work for triangulated graphs in the combinatorial setting. For the combinatorial setting in particular, significant work has been done on the flip distance of triangulated graphs. We highlight below some of this work.

Wagner~\cite{wagner} studied the flip distance of triangulated graphs, which are maximal (simple) planar graphs. All the faces of an embedding of a triangulated graph, including the external face, are triangles (cycles of length 3). A \emph{flip} of an edge in this setting is the operation of replacing the edge with another edge to obtain another triangulated graph, provided that the obtained graph remains simple. Wagner~\cite{wagner} showed an $\Oh(n^2)$ upper bound on the flip distance of any two triangulated graphs by showing that any triangulated graph can be transformed into a canonical triangulated graph using $\Oh(n^2)$ flips. A sequence of improvements on this upper bound followed, and the current-best upper bound due to Mori {\em et al.}~\cite{mori} is $6n-30$. The flip distance between vertex-labeled triangulated graphs was also studied~\cite{bosehurtado,hurtado,tarjan}; those are triangulated graphs defined on the same vertex set, or given with a bijection between their two vertex sets. It was shown that the flip distance between any two (vertex) labeled triangulated graphs is $\Oh(n\lg{n})$~\cite{tarjan}, and this upper bound is tight; that is, there exist labeled triangulated graphs whose flip distance is $\Omega(n\lg{n})$. The (classical) complexity of computing the flip distance for both triangulated graphs and labeled triangulated graphs remains open.

In Section~\ref{sec:extensions}, we show how the presented algorithm for {\sc Flip Distance} can be employed to compute the flip distance between triangulations of any polygonal region, even with holes in its interior.  We also show in Section~\ref{sec:extensions} how to extend our results to compute the flip distance between two labeled triangulated graphs in time $\Oh(n+k\cdot c^{k}\cdot \lg{n})$.

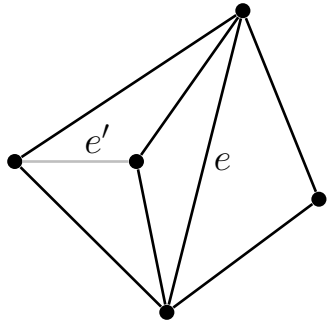
\begin{figure}\label{fig:admissible}
\begin{center}
\begin{tikzpicture}
[vertex/.style={fill,circle,inner sep = 2pt}]

\node (x) at (7, 1) [vertex]{};
\node (a) at (4,-1) [vertex] {};
\node (b) at (5.6,-1) [vertex]{};
\node (c) at (8,-1.5) [vertex] {};
\node (d) at (6,-3) [vertex] {};

\draw[line width=1] (a) -- (x) -- (c) -- (d) -- (a);

\draw[black, line width=1] (x) -- (d) node[midway, anchor=west] {\Large $e$};
\draw[black, line width=1] (x) -- (b) ;
\draw[black, line width=1] (b) -- (d) ;
\draw[lightgray, line width=1] (b) -- (a) node[midway, anchor=south west] {\black \Large $e'$};
\end{tikzpicture}

\end{center}
\caption{Illustration of admissible vs.~inadmissible flips: flipping $e$ is admissible but flipping $e'$ is inadmissible.}
\end{figure}

\section{Preliminaries}\label{sec:prelim}

\paragraph{{\bf Flips, triangulations, and flip distance.}} Let ${\cal P}$ be a set of $n$ points in the plane, and let ${\cal T}$ be a triangulation of ${\cal P}$. Let $e$ be an interior (non-boundary) edge in ${\cal T}$. The {\em quadrilateral associated with} $e$ in ${\cal T}$, denoted $Q_e$, is defined to be the quadrilateral formed by the two adjacent triangles in ${\cal T}$ that share $e$ as an edge. Let $e$ be an edge in ${\cal T}$ such that the quadrilateral $Q_e$ in ${\cal T}$ associated with $e$ is convex.  A {\em flip} $f$ with {\em underlying edge} $e$ is an operation performed to $e$ in triangulation ${\cal T}$ that removes $e$ and replaces it with the other diagonal of $Q_e$, thus obtaining a new triangulation of ${\cal P}$ from ${\cal T}$. We use the notation $\epsilon(f)$ to denote the underlying edge $e$ of a flip $f$ in ${\cal T}$, and the notation $\phi(f)$ to denote the new edge resulting from flip $f$. Note that $\phi(f)$ is not in ${\cal T}$. We say that a flip to an edge $e$ is {\em admissible} in triangulation ${\cal T}$ if $e$ is in ${\cal T}$ and the quadrilateral $Q_e$ associated with $e$ is convex. (See Fig.~\ref{fig:admissible} for illustration.) We say that two distinct edges $e$ and $e'$ in ${\cal T}$ {\em share a triangle} if $e$ and $e'$ appear in the same triangle in ${\cal T}$. We say that two distinct edges $e$ and $e'$ between points in ${\cal P}$ {\em cross} if $e$ and $e'$ intersect in their interior.

Let ${\cal T}$ be a triangulation. A sequence of flips $F=\langle f_1,\ldots, f_r \rangle$ is {\em valid} with respect to ${\cal T}$ if there exist triangulations ${\cal T}_0, \ldots, {\cal T}_r$ such that ${\cal T}_0 = {\cal T}$, $f_i$ is admissible in ${\cal T}_{i-1}$, and performing flip $f_i$ in ${\cal T}_{i-1}$ results in triangulation ${\cal T}_i$, for $i=1, \ldots, r$. In this case we say that ${\cal T}_r$ is the {\em outcome} of applying $F$ to ${\cal T}$ and that $F$ {\em transforms} ${\cal T}$ into ${\cal T}_r$, and we write ${\cal T} \xrightarrow{F} {\cal T}_r$. The {\em length} of $F$, denoted $|F|$, is the number of flips in it. Many flips in a sequence $F$ may have the same underlying edge, but all those flips are distinct flips. For two flips $f_i$ and $f_h$ of $F$ such that $i < h$, a flip $f_p$ in $F$ is said to be {\em between} $f_i$ and $f_h$ if $i < p < h$.

For two triangulations ${\cal T}_{initial}$ and ${\cal T}_{final}$ of ${\cal P}$, the {\em flip distance} between ${\cal T}_{initial}$ and ${\cal T}_{final}$ is the smallest $d \in \nat$ such that there is a sequence $F$ of length $d$ satisfying that ${\cal T}_{initial} \xrightarrow{F} {\cal T}_{final}$.
The {\sc Flip Distance} problem is defined as follows: \\

\paramproblem{{\sc Flip Distance}}{Two triangulation ${\cal T}_{initial}$ and ${\cal T}_{final}$ of ${\cal P}$.}{$k$.}{Is the flip distance between ${\cal T}_{initial}$ and ${\cal T}_{final}$ at most $k$?} \\

Let $({\cal T}_{initial}, {\cal T}_{final}, k)$ be an instance of {\sc Flip Distance}. A \emph{solution} to $({\cal T}_{initial}, {\cal T}_{final}, k)$ is any sequence of flips $F$ (if it exists) such that ${\cal T}_{initial} \xrightarrow{F} {\cal T}_{final}$ and $|F| \leq k$. A solution is {\em minimum} if its length is minimum over all solutions to $({\cal T}_{initial}, {\cal T}_{final}, k)$. Observe that if a minimum solution to $({\cal T}_{initial}, {\cal T}_{final}, k)$ exists, then its length is equal to the flip distance between ${\cal T}_{initial}$ and ${\cal T}_{final}$.

\paragraph{{\bf Parameterized complexity.}}
A {\em parameterized problem} is a set of instances of the form
$(x, k)$, where $x$ is the input instance and $k \in \nat$
is the {\it parameter}. A parameterized problem is
{\it fixed-parameter tractable}, shortly $\FPT$, if there
is an algorithm that solves the problem in time $f(k)|x|^{c}$, where $f$ is a computable function and $c > 0$ is
a constant. A parameterized problem is {\em kernelizable}
if there exists a polynomial-time reduction that maps an instance $(x,k)$ of
the problem to another instance $(x',k')$ such that: (1) $|x'| \leq \lambda(k)$ for
some computable function $\lambda$, (2) $k' \leq \lambda(k)$, and (3) $(x,k)$ is a yes-instance
of the problem if and only if $(x',k')$ is. The instance
$(x',k')$ is called the {\em kernel} of $(x, k)$. We refer to~\cite{fptbook,rolfbook} for more information about parameterized complexity.

\paragraph{{\bf Graphs.}}
All graphs considered in this paper are simple finite graphs with no multiple edges or self-loops. Let $G$ be a graph. $V(G)$ and $E(G)$ denote the vertex-set and the edge-set of $G$, respectively, and $|G|$ denotes the  {\em size} of $G$, which is $|V(G)| + |E(G)|$. If $S$ is any set of vertices in $G$, we write $G-S$ for the subgraph of $G$ obtained by deleting all the vertices in $S$ and their incident edges. For a directed graph $G$, a {\em weakly connected component} of $G$ is a (maximal) connected component of the underlying undirected graph of $G$ (that is, the undirected graph obtained from $G$ by replacing all directed edges with undirected edges); for simplicity, we will use the term {\em component} of a directed graph $G$ to refer to a weakly connected component of $G$.

A graph is {\it planar} if it can be drawn in the plane without edge
intersections (except at the endpoints). A {\it plane graph} has a
fixed drawing.  Each maximal connected region of the plane minus the
drawing is an open set; these are the {\em faces}. One is unbounded,
called the {\em outer face}.  A face is a {\em triangle}, or a {\em triangular face}, if it is a cycle of length 3 (\ie, $C_3$). A {\em rotation system}, $\RRR$, of a plane graph $G$ is a clockwise/counterclockwise cyclical ordering of the edges around each vertex of $G$. We sometime refer to the vertices of a plane graphs by {\em points}.

A simple planar graph $G$ is {\em triangulated} if $G$ can be embedded in the plane so that each face
of the embedding, including the external/outer face, is a triangle. It is well known that $G$ is triangulated if and only if it is a maximal planar graph, in which case $G$ has $3n-6$ edges.
If $G$ is a triangulated planar graph on $n \geq 4$ vertices then $G$ is 3-connected~\cite{diestel}, and it has a unique combinatorial embedding up to homeomorphism~\cite{diestel} (\ie, topological isomorphism). Moreover,
for any two triangulated planar graphs $G$ and $H$, every isomorphism between $G$ and $H$ is a homeomorphism~\cite{diestel}. We call an embedded triangulated planar graph a {\em plane triangulation}.  We refer to~\cite{diestel} for more information on graphs.

\section{Structural results}\label{sec:structural}
Let ${\cal T}$ be a triangulation and let $F=\langle f_1,\ldots, f_r \rangle$ be a valid sequence of flips with respect to ${\cal T}$. We denote by ${\cal T}_i$, for $i=1, \ldots, r$, the triangulation that is the outcome of applying the (valid) subsequence of flips $\langle f_1,\ldots, f_i \rangle$ to ${\cal T}$.

\begin{definition}\label{def:adjacent}\rm
Let $f_i$ and $f_j$ be two flips in $F$ such that $1 \leq i < j\leq r$. Flip $f_j$ is said to be {\em adjacent} to flip $f_i$, denoted $f_i \rightarrow f_j$, if:
\begin{itemize}
\item[(1)] either $\phi(f_i) = \epsilon(f_j)$ (\ie, $\epsilon (f_j)$ results from flip $f_i$), or $\phi(f_i)$ is an edge of the quadrilateral $Q_{\epsilon(f_j)}$ associated with $\epsilon(f_j)$ in triangulation ${\cal T}_{j-1}$; and
\item[(2)] $\phi(f_i)$ is not flipped between $f_i$ and $f_j$, that is, there does not exist a flip $f_p$ in $F$, where $i < p < j$, such that $\epsilon(f_p) =\phi(f_i)$.
\end{itemize}
\end{definition}

The above adjacency relation defined on the flips in $F$ can be naturally represented by a directed acyclic graph (DAG), denoted ${\cal D}_F$,
where the nodes of ${\cal D}_F$ are the flips in $F$, and its arcs represent the (directed) adjacencies in $F$. Note that by definition, if $f_i \rightarrow f_j$ then $i< j$. For simplicity, we will label the nodes in ${\cal D}_F$ with the labels of their corresponding flips in $F$.

\begin{lemma}\label{lem:dagsize}
Every node in ${\cal D}_F$ has indegree at most $5$. Therefore, $|E({\cal D}_F)| \leq 5 \cdot |V({\cal D}_F)|$ and $|{\cal D}_F| \leq 6 \cdot |V({\cal D}_F)|$.
\end{lemma}

\begin{proof}
For any node $f_j \in {\cal D}_F$, the quadrilateral $Q_{\epsilon(f_j)}$ associated with $\epsilon(f_j)$ in ${\cal T}_{j-1}$ has four edges.  Therefore, every node $f_j$ in ${\cal D}_F$ has indegree at most 5, taking into consideration the possible last flip that created $\epsilon(f_j)$. \qed \end{proof}

Recall that a topological sorting of a DAG is {\em any} ordering of its nodes that satisfies: For any two nodes $u, v$ in the DAG, if there is an arc from $u$ to $v$, then $u$ appears before $v$ in the ordering. There could be many different topological sorting of ${\cal D}_F$, but the following key lemma asserts that all of them yield the same outcome and have the same DAG:


\begin{lemma} \label{lem:ts}
Let ${\cal T}_0$ be a triangulation and let $F=\langle f_1,\ldots, f_r \rangle$ be a sequence of flips such that ${\cal T}_0  \xrightarrow{F} {\cal T}_r$. Let $\pi(F)$ be a permutation of the flips in $F$ such that $\pi(F)$ is a topological sorting of ${\cal D}_F$. Then $\pi(F)$ is a valid sequence of flips such that ${\cal T}_0 \xrightarrow{\pi(F)} {\cal T}_r$. Furthermore, the DAG ${\cal D}_{\pi(F)}$, defined based on the sequence $\pi(F)$, is the same directed graph as ${\cal D}_F$.
\end{lemma}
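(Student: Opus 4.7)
The plan is to prove the lemma by induction on $r$, using as the main tool a local swap lemma: if $\sigma=\langle g_1,\ldots,g_r\rangle$ is a valid flip sequence and there is no arc $g_i\to g_{i+1}$ in ${\cal D}_\sigma$, then the sequence obtained by swapping $g_i$ and $g_{i+1}$ is also valid and has the same outcome. To prove the swap lemma, notice that consecutive non-adjacency makes condition~(2) of Definition~\ref{def:adjacent} vacuous, so the hypothesis reduces to $\phi(g_i)\neq\epsilon(g_{i+1})$ together with their not sharing a triangle in the post-$g_i$ triangulation ${\cal T}'_i$. Writing $a,b,c,d$ for the outer edges of $g_i$'s quadrilateral in ${\cal T}'_{i-1}$, these are exactly the edges sharing a triangle with $\phi(g_i)$ in ${\cal T}'_i$, so the hypothesis forces $\epsilon(g_{i+1})\notin\{\epsilon(g_i),\phi(g_i),a,b,c,d\}$. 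The two triangles on $\epsilon(g_{i+1})$ in ${\cal T}'_i$ then use none of the edges $g_i$ modified, so they already exist in ${\cal T}'_{i-1}$; hence $g_{i+1}$'s quadrilateral is already convex and $g_{i+1}$ is admissible there. A symmetric argument (using that $\phi(g_{i+1})\notin{\cal T}'_{i-1}$, which follows from a crossing argument ruling out $\phi(g_{i+1})=\epsilon(g_i)$) shows $g_i$'s quadrilateral is undisturbed by performing $g_{i+1}$ first, so $g_i$ remains admissible afterward, and the two flips commute to the same triangulation.

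For the inductive step ($r>1$), let $f_j$ be the first flip of $\pi(F)$; being first in a topological sort it is a source of ${\cal D}_F$. The structural crux is the observation that each of the five edges $\{\epsilon(f_j),a,b,c,d\}$ of $f_j$'s quadrilateral in ${\cal T}_{j-1}$ lies in every ${\cal T}_m$ for $0\leq m\leq j-1$. If some such $e$ were absent from some ${\cal T}_m$, then since $e\in{\cal T}_{j-1}$, taking the largest $l<j$ with $\phi(f_l)=e$ would satisfy both conditions of Definition~\ref{def:adjacent}, giving an arc $f_l\to f_j$ that contradicts the source property. A crossing argument further rules out $\phi(f_j)$ being present in any ${\cal T}_m$ with $m<j$. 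As a consequence, the quadrilateral of $f_j$ in ${\cal T}_0$ coincides with that in ${\cal T}_{j-1}$, so $f_j$ is admissible in ${\cal T}_0$; and $f_j$ can be shifted leftward in $F$ via $j-1$ applications of the swap lemma, producing the valid sequence $\hat F=\langle f_j,f_1,\ldots,f_{j-1},f_{j+1},\ldots,f_r\rangle$. The non-adjacency needed at each step reduces to $\phi(f_l)\notin\{\epsilon(f_j),a,b,c,d\}$, which holds because $\phi(f_l)\notin{\cal T}_{l-1}$ while all five of these edges lie in ${\cal T}_{l-1}$ by the structural observation. Dropping the leading $f_j$ from $\hat F$ yields a valid sequence $F'=\langle f_1,\ldots,f_{j-1},f_{j+1},\ldots,f_r\rangle$ from the post-$f_j$ triangulation ${\cal T}'_0$ to ${\cal T}_r$.

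To close the induction I verify that $\pi(F)\setminus\{f_j\}$ is a topological sort of ${\cal D}_{F'}$, after which the inductive hypothesis applied to $F'$ finishes the proof. The structural observation shows that the prefix $f_1,\ldots,f_{j-1}$ avoids all six edges $\{\epsilon(f_j),\phi(f_j),a,b,c,d\}$, so the intermediate triangulations of $F$ and $F'$ for times $\leq j-1$ differ only in whether $\epsilon(f_j)$ or $\phi(f_j)$ is the diagonal of $f_j$'s quadrilateral, and they coincide from time $j$ onward. A direct comparison of the adjacency conditions under this correspondence---using crucially that no $f_p$ with $p<j$ can have $\phi(f_p)=\epsilon(f_j)$ (otherwise $f_p\to f_j$ would contradict the source property)---shows that ${\cal D}_{F'}$ has exactly the arcs of ${\cal D}_F$ restricted to $V({\cal D}_F)\setminus\{f_j\}$. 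Hence $\pi(F)\setminus\{f_j\}$ is a topological sort of ${\cal D}_{F'}$, and the induction completes.

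The hardest part will be this closing step: verifying that removing a source from the sequence preserves the DAG (up to node deletion) requires tracking each of the six edges of $f_j$'s quadrilateral through $f_1,\ldots,f_{j-1}$ and systematically ruling out differences in both the ``share a triangle'' clause (condition~(1)) and the ``flipped in between'' clause (condition~(2)) of Definition~\ref{def:adjacent}, with the source property of $f_j$ invoked in several guises throughout.
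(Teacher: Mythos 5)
Your proof is correct and is essentially the mirror image of the paper's: the paper peels off the \emph{last} element of $\pi(F)$ (a sink of ${\cal D}_F$) and commutes it to the end of $F$, whereas you peel off the \emph{first} element (a source) and commute it to the front via the swap lemma, in both cases recursing on the shortened sequence. Your write-up is in fact more careful than the paper's on the one delicate point both arguments share --- that deleting the extremal flip from $F$ leaves the DAG of the shortened sequence equal to the corresponding induced subgraph of ${\cal D}_F$, so the inductive hypothesis actually applies --- which the paper asserts essentially without proof.
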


\begin{proof}
We will prove a stronger statement than that of the lemma. We define a marking on the edges of the triangulations resulting from a sequence of flips as follows: After a flip $f_i$ is performed, {\em mark} the resulting edge $\phi(f_i)$ by $f_i$; if the edge $\phi(f_i)$ is already marked by a previous flip, overwrite the previous mark by the new one.

We will prove that (1) $\pi(F)$ is a valid sequence of flips such that ${\cal T}_0 \xrightarrow{\pi(F)} {\cal T}_r$, (2) the marking on ${\cal T}_r$ after $\pi(F)$ is performed is the same as that after $F$ is performed (\ie, each edge in ${\cal T}_r$, if marked, is marked by the same flip), and (3) ${\cal D}_{\pi(F)} = {\cal D}_F$.

Proceed by induction on $|F|$. If $|F| \leq 1$, then obviously the statement holds true. Suppose that the statement is true for any $F$ such that $|F| < r$, where $r > 1$, and consider a sequence $F$ such that $|F|=r$.
Let $f_s$ be the last flip in $\pi(F)$. Since $\pi(F)$ is a topological sorting of ${\cal D}_F$, $f_s$ must be a sink in ${\cal D}_F$. It follows that no flip after $f_s$ in $F$ is adjacent to $f_s$ in ${\cal D}_F$. Let $Q_{\phi(f_s)}$ be the quadrilateral associated with $\phi(f_s)$ in triangulation ${\cal T}_s$. Then no flip after $f_s$ in $F$ has its underlying edge as a boundary edge of $Q_{\phi(f_s)}$ or as a diagonal of $Q_{\phi(f_s)}$, which means that the two adjacent triangles forming $Q_{\phi(f_s)}$ in ${\cal T}_s$ remain unchanged throughout the flips after $f_s$ in $F$. Therefore, we can safely move the flip $f_s$ to the end of the sequence $F$ without affecting the other flips in $F$ nor the validity of $F$. Let this new sequence be $F'$. From the previous arguments, it follows that $F'$ transforms ${\cal T}_0$ into ${\cal T}_r$, the marking on ${\cal T}_r$ resulting from $F'$ is the same as the marking on ${\cal T}_r$ resulting from $F$, and ${\cal D}_F = {\cal D}_{F'}$.

Since $f_s$ appears at the end of $F'$, $F'-f_s$ is a valid sequence with respect to ${\cal T}_0$ that transforms ${\cal T}_0$ into some triangulation ${\cal T}$ such that ${\cal T} \xrightarrow{f_s} {\cal T}_r$. Note that since $f_s$ is a sink in ${\cal D}_F$, $\pi(F)-f_s$ is a permutation of the flips in $F'-f_s$ that is a topological sorting of ${\cal D}_F -f_s$. By the inductive hypothesis, $\pi(F)-f_s$ transforms ${\cal T}_0$ into ${\cal T}$, the marking on ${\cal T}$ resulting from $F'-f_s$ is the same as that resulting from $\pi(F)-f_s$, and ${\cal D}_{F'-f_s} = {\cal D}_{\pi(F)-f_s}$.

Since ${\cal T} \xrightarrow{f_s} {\cal T}_r$, appending $f_s$ to the end of $\pi(F)-f_s$ results in $\pi(F)$ such that ${\cal T}_0 \xrightarrow{\pi(F)} {\cal T}_r$. Since $\phi(f_s)$ is marked by flip $f_s$ in ${\cal T}_r$ both after applying $F'$ and $\pi(F)$, the marking on ${\cal T}_r$ resulting from $\pi(F)$ is the same as that resulting from $F$. Now ${\cal D}_F$ is formed by adding $f_s$ to ${\cal D}_{F'-f_s}$ along with its incoming edges, and ${\cal D}_{\pi(F)}$ is formed by adding $f_s$ to ${\cal D}_{\pi(F)-f_s}$ along with its incoming edges. Because the marking on ${\cal T}$ resulting from $\pi(F)-f_s$ is the same as the marking resulting from $F'-f_s$, the incoming edges to $f_s$ in ${\cal D}_{\pi(F)}$ are the same as the incoming edges to $f_s$ in ${\cal D}_F$. It follows that ${\cal D}_{\pi(F)} = {\cal D}_{F'} = {\cal D}_{F}$. This completes the inductive proof. \qed
\end{proof}

\begin{corollary}\label{cor:ts}
Let ${\cal T}_0$ be a triangulation and let $F=\langle f_1,\ldots, f_r \rangle$ be a sequence of flips such that ${\cal T}_0  \xrightarrow{F} {\cal T}_r$. For any given ordering $(C_1, \ldots, C_{\ell})$ of the components in ${\cal D}_F$, there is a permutation $\pi(F)$ of the flips in $F$ such that ${\cal T}_0 \xrightarrow{\pi(F)} {\cal T}_r$, and such that for any two flips $f_i \in C_t$ and $f_j \in C_s$, where $1 \leq t < s \leq \ell$, $f_i$ appears before $f_j$ in $\pi(F)$. That is, all the flips in the same component appear as a consecutive ``block'' (\ie, the flips in the same component appear consecutively) in $\pi(F)$, and the order of the blocks of flips in $\pi(F)$ is the same as the given order of their corresponding components.
\end{corollary}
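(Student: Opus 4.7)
The plan is to build $\pi(F)$ by topologically sorting each component separately and then concatenating the resulting orderings in the prescribed order, after which Lemma~\ref{lem:ts} delivers the conclusion for free.

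More concretely, I would first observe that since $C_1,\ldots,C_{\ell}$ are the (weakly connected) components of ${\cal D}_F$, the underlying undirected graph of ${\cal D}_F$ has no edge between any two distinct components; in particular, no directed arc of ${\cal D}_F$ crosses from one $C_t$ to another $C_s$. Next, each $C_t$, being an induced subgraph of the DAG ${\cal D}_F$, is itself a DAG, and therefore admits a topological sorting $\sigma_t$ of its nodes. Define $\pi(F) := \sigma_1 \sigma_2 \cdots \sigma_{\ell}$, i.e., the concatenation of the $\sigma_t$'s in the order given by $(C_1,\ldots,C_{\ell})$.

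The main (and essentially only) verification is that $\pi(F)$ is a topological sorting of the whole DAG ${\cal D}_F$. Let $f_i \to f_j$ be any arc of ${\cal D}_F$. By the no-cross-arc observation above, $f_i$ and $f_j$ belong to the same component, say $C_t$. Since $\sigma_t$ is a topological sorting of $C_t$, $f_i$ precedes $f_j$ in $\sigma_t$, hence also in $\pi(F)$. Thus $\pi(F)$ is a topological sorting of ${\cal D}_F$ that, by construction, places every flip of $C_t$ before every flip of $C_s$ whenever $t < s$.

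Applying Lemma~\ref{lem:ts} to this topological sorting $\pi(F)$ yields that $\pi(F)$ is valid with respect to ${\cal T}_0$ and that ${\cal T}_0 \xrightarrow{\pi(F)} {\cal T}_r$, which is exactly the statement of the corollary. There is no real obstacle here; the only subtlety to be careful about is the absence of arcs between distinct components, which is immediate from the definition of weakly connected component, and the fact that Lemma~\ref{lem:ts} applies to \emph{any} topological sorting, which is why we are free to choose the one respecting the given ordering of components.
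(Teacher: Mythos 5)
Your proof is correct and follows exactly the paper's argument: since no arc of ${\cal D}_F$ crosses between distinct weakly connected components, concatenating topological sortings of the components in the prescribed order yields a topological sorting of the whole DAG, to which Lemma~\ref{lem:ts} applies. You have merely spelled out the details that the paper's one-sentence proof leaves implicit.
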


\begin{proof}
For any given ordering of the components in ${\cal D}_F$, there is a topological sorting of ${\cal D}_F$ in which all the flips in the same component of ${\cal D}_F$ appear consecutively as a block, and in which the blocks of flips appear in the same order as the given order of their components. \qed
\end{proof}

\begin{definition}\rm \label{def:changededges}
Let $({\cal T}_{initial}, {\cal T}_{final}, k)$ be an instance of {\sc Flip Distance}. An edge in ${\cal T}_{initial}$ that is not in ${\cal T}_{final}$ is called a {\em changed edge}. If a sequence $F$ is a solution to the instance $({\cal T}_{initial}, {\cal T}_{final}, k)$, we call a component in ${\cal D}_F$ {\em essential} if the component contains a flip $f$ such that $\epsilon(f)$ is a changed edge, otherwise, the component is called {\em nonessential}. (Recall that a component of ${\cal D}_F$ stands for a weakly connected component of ${\cal D}_F$.)
\end{definition}

\begin{lemma}\label{lem:essential} Let $({\cal T}_{initial}, {\cal T}_{final}, k)$ be an instance of {\sc Flip Distance}, and suppose that $F$ is a minimum solution to the instance. Then every component of ${\cal D}_F$ is essential.
\end{lemma}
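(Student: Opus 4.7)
My plan is to proceed by contradiction. Assume $F$ is a minimum-length solution (forced because the flip distance equals $|F|=k$) and that some component $C$ of ${\cal D}_F$ is nonessential; I will derive a shorter valid sequence from ${\cal T}_{initial}$ to ${\cal T}_{final}$, contradicting the minimality of $F$.

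I would first invoke Corollary~\ref{cor:ts} with an ordering of the components of ${\cal D}_F$ that places $C$ last, yielding a reordered $F = F' \cdot F_C$, where $F'$ is a reordering of $F \setminus C$ and $F_C$ is a reordering of $C$. Let ${\cal T}^*$ denote the triangulation after $F'$, so that ${\cal T}_{initial} \xrightarrow{F'} {\cal T}^* \xrightarrow{F_C} {\cal T}_{final}$. The whole plan reduces to showing ${\cal T}^* = {\cal T}_{final}$, since then $F'$ is itself a valid sequence of length $|F|-|C|<|F|$, giving the desired contradiction.

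To prove ${\cal T}^* = {\cal T}_{final}$, I would rule out any edge $e$ lying in their symmetric difference. The two sides are handled analogously; for illustration consider $e \in {\cal T}^* \setminus {\cal T}_{final}$. Since $F_C$ must destroy $e$, let $f$ be the earliest flip of $F_C$ with $\epsilon(f)=e$, so $f \in C$. If $e \in {\cal T}_{initial}$, then $e$ is a changed edge and $f \in C$ contradicts the nonessentiality of $C$. Otherwise $e \notin {\cal T}_{initial}$; since $e \in {\cal T}^*$, some flip in $F'$ must have created $e$, and I would pick $g$ to be the latest such flip in $F'$. By the choices of $g$ and $f$, together with the observation that no $F_C$-flip before $f$ can re-create the already-present edge $e$, no flip between $g$ and $f$ in the reordered sequence touches $e$. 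Hence $g$ is adjacent to $f$ in the sense of Definition~\ref{def:adjacent}, so an arc $g \rightarrow f$ should appear in ${\cal D}_F$, placing $g \in F'$ and $f \in C$ in the same component, a contradiction. For $e \in {\cal T}_{final} \setminus {\cal T}^*$, I would instead take $f$ to be the last flip of $F_C$ with $\phi(f)=e$; since two diagonals of a quadrilateral cannot coexist in a triangulation, $\epsilon(f) \notin {\cal T}_{final}$, and a symmetric backward trace through the creators of $\epsilon(f)$ either reaches a changed edge (contradicting nonessentiality) or witnesses an illicit arc between an $F'$-flip and a flip of $C$.

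The main obstacle I anticipate is justifying that the adjacency observed in the reordered sequence $F' \cdot F_C$ translates into an arc of the original ${\cal D}_F$: strictly, ${\cal D}_F$ is defined via the original ordering of $F$, whereas my argument detects adjacencies in the reordered sequence. I expect this to follow from the observation that arcs of ${\cal D}_F$ encode the last flip affecting the local quadrilateral of each subsequent flip---an invariant that any topological sort of ${\cal D}_F$ must preserve---so that the arc set of the adjacency relation is invariant under such reorderings. Making this rigorous will require careful bookkeeping on how intermediate triangulations shift under reordering, and is the one step in the plan where real work is needed.
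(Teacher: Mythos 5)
Your proposal is correct and follows essentially the same route as the paper: reorder $F$ via Corollary~\ref{cor:ts} so that the nonessential component comes last, argue that the prefix already produces ${\cal T}_{final}$, and derive a contradiction either with nonessentiality (if the offending edge is a changed edge) or with the component structure of ${\cal D}_F$ (via an arc from a creator flip outside $C$ to a flip inside $C$). Two small remarks: your second case $e \in {\cal T}_{final}\setminus{\cal T}^*$ is superfluous, since ${\cal T}^*$ and ${\cal T}_{final}$ have the same number of edges and hence a nonempty symmetric difference forces an edge in ${\cal T}^*\setminus{\cal T}_{final}$ (this is exactly the shortcut the paper takes); and the reordering subtlety you flag at the end is real but is left equally implicit in the paper's own proof, which silently identifies the components of the DAG of the reordered sequence with those of ${\cal D}_F$.
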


\begin{proof}
Suppose, to get a contradiction, that ${\cal D}_F$ contains a nonessential component $C$. Let $F_C$ be the subsequence of $F$ consisting of the flips that are in $C$. We will show that $F - F_C$ is a solution to the instance $({\cal T}_{initial}, {\cal T}_{final}, k)$, which contradicts the minimality of $F$.

By Corollary~\ref{cor:ts}, we can assume that all the flips in $F_C$ appear consecutively (\ie, as a single block) at the end of $F$. Let ${\cal T'}$ be the outcome of applying $F - F_C$ to ${\cal T}_{initial}$. It suffices to show that ${\cal T'} = {\cal T}_{final}$. Suppose that this is not the case. Since the number of edges in ${\cal T'}$ and ${\cal T}_{final}$ is the same, there must exist an edge $e \in {\cal T'}$ such that $e \notin {\cal T}_{final}$.  Therefore, $C$ must contain a flip $f$ such that $\epsilon(f)=e$; assume that $f$ is the first such flip in $C$. Since $C$ is nonessential, $e \notin {\cal T}_{initial}$, otherwise $e$ would be a changed edge. Therefore, there must exist a flip $f'$ in $F- F_C$ such that $\phi(f')=e$; we can assume that $f'$ is the last such flip in $F- F_C$. By the definition of adjacency in ${\cal D}_F$, there is an arc from node $f'$ in ${\cal D}_F - C$ to node $f$ in $C$, contradicting the assumption that $C$ is a component of ${\cal D}_F$. \qed
\end{proof}

Let $({\cal T}_{initial}, {\cal T}_{final}, k)$ be an instance of {\sc Flip Distance}, and suppose that $F$ is a minimum solution for $({\cal T}_{initial}, {\cal T}_{final}, k)$. By Lemma~\ref{lem:essential}, ${\cal D}_F$ does not contain nonessential components, and by Corollary~\ref{cor:ts}, we can assume that all the flips in the same component of ${\cal D}_F$ appear as a consecutive block in $F$. We shall call such a solution $F$ satisfying the above properties, including minimality, a {\em normalized} solution. Suppose that $F=\langle f_1, \ldots, f_r \rangle$, where $r \leq k$, is a normalized solution to an instance $({\cal T}_{initial}, {\cal T}_{final}, k)$ of {\sc Flip Distance}, and let $C$ be a component of ${\cal D}_F$. The following lemmas provide several sufficient conditions for a directed path to exist between two flips in $C$.

\begin{lemma} \label{lem:crossing}
Let $f_i$ and $f_h$, where $i < h$ and $\epsilon(f_i) \neq \epsilon(f_h)$, be two flips in $C$ such that $\phi(f_h)$ crosses $\epsilon(f_i)$, and $\epsilon(f_i)$ is not flipped between $f_i$ and $f_h$. There is a directed path from $f_i$ to $f_h$ in $C$.
\end{lemma}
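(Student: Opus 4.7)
The plan is to argue by induction on $h - i$. The central geometric observation is that, although $e^* := \epsilon(f_i)$ has been flipped away at step $i$ and (as I will show) is absent from ${\cal T}_{h-1}$, the fact that the new diagonal $\phi(f_h)$ of the quadrilateral $Q_h$ associated with $\epsilon(f_h)$ in ${\cal T}_{h-1}$ crosses $e^*$ forces $e^*$ to enter the interior of $Q_h$. This in turn forces $e^*$ to cross some boundary edge $e_b$ of $Q_h$, which will be the edge that pulls $f_h$ into the chain of adjacencies we need to build.

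First I would verify that $e^* \notin {\cal T}_{h-1}$: if it were, then since $\phi(f_h)$ lies in the interior of $Q_h$ and the only edge of ${\cal T}_{h-1}$ meeting the interior of $Q_h$ is $\epsilon(f_h)$, we would need $e^* = \epsilon(f_h)$, contradicting the hypothesis $\epsilon(f_i) \neq \epsilon(f_h)$. Then a short case analysis on where $e^*$ enters and exits $Q_h$ shows that $e^*$ must cross a boundary edge $e_b$ of $Q_h$; the only alternative (both endpoints of $e^*$ being vertices of $Q_h$) would force $e^*$ to be one of the two diagonals $\epsilon(f_h)$ or $\phi(f_h)$, each excluded by hypothesis. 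Since $e_b$ is a boundary edge of $Q_h$, it shares a triangle with $\epsilon(f_h)$ in ${\cal T}_{h-1}$, making $e_b$ a good candidate to produce an incoming arc to $f_h$.

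Next I would track the history of $e_b$. Because $e_b$ crosses $\epsilon(f_i)$, which lies in ${\cal T}_{i-1}$, we have $e_b \notin {\cal T}_{i-1}$; combined with $e_b \in {\cal T}_{h-1}$, this yields a latest index $g$ with $i \le g < h$ and $\phi(f_g) = e_b$, and by maximality of $g$ the edge $e_b$ cannot be flipped between $f_g$ and $f_h$ (otherwise it would need to be recreated, contradicting maximality). Hence, by Definition~\ref{def:adjacent}, there is an arc $f_g \to f_h$ in ${\cal D}_F$. If $g = i$, this arc is precisely $f_i \to f_h$ and we are done; in particular this handles the base case $h - i = 1$.

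If $g > i$, I would apply the inductive hypothesis to the pair $(f_i, f_g)$. The three premises are easy to check: $\epsilon(f_g) \neq \epsilon(f_i)$ (otherwise $f_g$ would be a flip of $e^*$ strictly between $f_i$ and $f_h$, contradicting the hypothesis); $\phi(f_g) = e_b$ crosses $\epsilon(f_i)$ by construction; and $\epsilon(f_i)$ is not flipped between $f_i$ and $f_g$ since it is not flipped between $f_i$ and $f_h$. Both $f_i$ and $f_g$ lie in $C$ ($f_g$ because $f_g \to f_h$ and $f_h \in C$), so the inductive hypothesis yields a directed path from $f_i$ to $f_g$ in $C$, and appending the arc $f_g \to f_h$ completes the argument. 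I expect the main obstacle to be making step two watertight: one must carefully rule out the degenerate possibilities in which $e^*$ meets $Q_h$ only at a vertex or coincides with a boundary edge, since these would prevent the production of a genuine crossing boundary edge $e_b$.
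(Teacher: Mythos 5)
Your proposal is correct and follows essentially the same route as the paper: both identify a boundary edge of the quadrilateral of the flip $f_h$ that crosses $\epsilon(f_i)$, trace its most recent creation to a flip $f_g$ with $i \le g < h$ to obtain an arc into $f_h$, and then recurse on the pair $(f_i,f_g)$ (the paper phrases the recursion as a minimal-counterexample argument rather than explicit induction on $h-i$, but these are equivalent). Your treatment of the degenerate geometric cases is, if anything, slightly more careful than the paper's appeal to ``a simple geometric observation.''
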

\begin{proof}
Assume that the statement is not true, and let $f_i$ and $f_h$ be the closest pair of flips in $C$ (in terms of the number of flips between $f_i$ and $f_h$) satisfying the conditions in the statement of the lemma and such that there is no directed path from $f_i$ to $f_h$ in $C$.

Consider the quadrilateral $Q_{\phi(f_h)}$ associated with $\phi(f_h)$ in ${\cal T}_h$. We first argue that $\epsilon(f_i)$ must cross an edge of  $Q_{\phi(f_h)}$. Because $\phi(f_h)$ crosses $\epsilon(f_i)$, there is a point $u$ in the plane that is interior to both $\epsilon(f_i)$ and $\phi(f_h)$, and hence, interior to $Q_{\phi(f_h)}$ as well. Since $\epsilon(f_i) \neq \epsilon(f_h)$ and $\phi(f_h)$ crosses $\epsilon(f_i)$, $\epsilon(f_i)$ cannot be a diagonal or an edge of $Q_{\phi(f_h)}$. The preceding statement, combined with the fact that the interior of $Q_{\phi(f_h)}$ is devoid of points of ${\cal P}$, implies that at least one endpoint of $\epsilon(f_i)$ lies outside of $Q_{\phi(f_h)}$. Now since $\epsilon(f_i)$ passes through $u$, which is interior to $Q_{\phi(f_h)}$, by the Jordan curve theorem, $\epsilon(f_i)$
intersects an edge $e$ (of the boundary) of $Q_{\phi(f_h)}$.

It follows from the fact that $\epsilon(f_i)$ intersects $e$ that there must exist a flip $f_p$, where $i \leq p < h$, such that $\phi(f_p)=e$. Since $e$ is an edge of $Q_{\phi(f_h)}$, which contains $\phi(f_h)$ as a diagonal in ${\cal T}_h$, $e$ is also an edge of $Q_{\phi(f_h)}$ in ${\cal T}_{h-1}$. This means that $\phi(f_p)$ and $\epsilon(f_h)$ share a triangle in ${\cal T}_{h-1}$, and hence there is an arc from $f_p$ to $f_h$ (we can assume that $f_p$ is the last flip before $f_h$ such that $e = \phi(f_p)$). If $i = p$ then there is a path (of length 0) from $f_i$ to $f_p$; on the other hand, if $i < p$, then since $\phi(f_p)$ crosses $\epsilon(f_i)$, $p < h$, and $\epsilon(f_p) \neq \epsilon(f_i)$, by the way $f_h$ is chosen, there is a directed path from $f_i$ to $f_p$ in $C$. Therefore, there is a directed path from $f_i$ to $f_h$ in $C$ --- a contradiction. \qed
\end{proof}

\begin{lemma}\label{lem:flip-restore}
Let $f_i$ and $f_h$, where $i < h$, be two flips in $C$ such that $\phi(f_h)=\epsilon(f_i)$, and $\epsilon(f_i)$ is not flipped between $f_i$ and $f_h$. There is a directed path from $f_i$ to $f_h$ in $C$.
\end{lemma}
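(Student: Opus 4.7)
My plan is to reduce this lemma to Lemma~\ref{lem:crossing} by locating an intermediate flip $f_p$, with $p \geq i$, whose newly created edge is exactly $\epsilon(f_h)$. Lemma~\ref{lem:crossing} will then provide a directed path from $f_i$ to $f_p$, which I extend by a single arc $f_p \rightarrow f_h$ to obtain the required path from $f_i$ to $f_h$ in $C$.

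The cornerstone geometric observation is that the two diagonals of the convex quadrilateral $Q$ associated with $\epsilon(f_h)$ in ${\cal T}_{h-1}$ are $\epsilon(f_h)$ and $\phi(f_h) = \epsilon(f_i)$. Since the diagonals of a convex quadrilateral cross, $\epsilon(f_h)$ crosses $\epsilon(f_i)$. Because $\epsilon(f_i) \in {\cal T}_{i-1}$ and no two edges of a triangulation cross, this forces $\epsilon(f_h) \notin {\cal T}_{i-1}$. On the other hand, $\epsilon(f_h) \in {\cal T}_{h-1}$, so some flip after $f_{i-1}$ and before $f_h$ must produce $\epsilon(f_h)$; let $f_p$ be the last such flip, so $\phi(f_p) = \epsilon(f_h)$. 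Maximality of $p$ implies that $\epsilon(f_h)$ is not flipped (as underlying edge) between $f_p$ and $f_h$, since otherwise it would have to be re-created by a later flip, contradicting maximality. Consequently, $\epsilon(f_h) \in {\cal T}_j$ for all $p \leq j \leq h-1$. If $p < i$, this would put $\epsilon(f_h)$ in ${\cal T}_{i-1}$, a contradiction; hence $p \geq i$.

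The proof then splits into two cases. If $p = i$, then $\phi(f_i) = \epsilon(f_h)$, and since $\phi(f_i)$ is not flipped between $f_i$ and $f_h$, Definition~\ref{def:adjacent} yields an arc $f_i \rightarrow f_h$ directly. If $p > i$, I apply Lemma~\ref{lem:crossing} to the pair $(f_i, f_p)$: the indices satisfy $i < p$; the underlying edges $\epsilon(f_i)$ and $\epsilon(f_p)$ differ because $\epsilon(f_i)$ is not flipped between $f_i$ and $f_h$; the edge $\phi(f_p) = \epsilon(f_h)$ crosses $\epsilon(f_i)$ by the cornerstone observation; and $\epsilon(f_i)$ is not flipped between $f_i$ and $f_p$. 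Lemma~\ref{lem:crossing} then yields a directed path from $f_i$ to $f_p$ in $C$, and concatenating with the arc $f_p \rightarrow f_h$ (justified by Definition~\ref{def:adjacent} via $\phi(f_p) = \epsilon(f_h)$ and the fact that $\epsilon(f_h)$ is not flipped between $f_p$ and $f_h$) produces the required directed path from $f_i$ to $f_h$.

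The step I expect to be most delicate is the bookkeeping needed to secure $p \geq i$, since it requires carefully tracking when the edge $\epsilon(f_h)$ is absent from and present in the triangulations across the interval $[i-1, h-1]$, and it hinges on the maximality of $p$. Once $p \geq i$ is in hand, the reduction to Lemma~\ref{lem:crossing} (or the immediate adjacency in the case $p = i$) is routine.
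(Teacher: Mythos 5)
Your proof is correct and follows essentially the same route as the paper's: both identify the last flip $f_p$ creating $\epsilon(f_h)$, observe that $\phi(f_p)=\epsilon(f_h)$ crosses $\epsilon(f_i)$ (since these are the two diagonals of the convex quadrilateral of $f_h$), apply Lemma~\ref{lem:crossing} to get a path from $f_i$ to $f_p$, and append the arc $f_p \rightarrow f_h$. Your write-up is somewhat more careful than the paper's in explicitly verifying $p \geq i$ and handling the degenerate case $p=i$ as a direct arc.
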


\begin{proof}
If $\phi(f_i) =\epsilon(f_h)$, then $f_i \rightarrow f_h$ and the statement of the lemma follows. Suppose now that $\phi(f_i) \neq \epsilon(f_h)$. Since $\phi(f_h)=\epsilon(f_i)$, both $\epsilon(f_i)$ and $\epsilon(f_h)$ are diagonals of $Q_{\epsilon(f_h)}$, and hence they cross. Therefore, $\epsilon(f_h)$ is not an edge of ${\cal T}_{i-1}$ (the triangulation just before
flip $f_i$). Since $\epsilon(f_h)$ is an edge in ${\cal T}_{h-1}$ (the triangulation just before flip $f_h$) and $\phi(f_i) \neq \epsilon(f_h)$, there must exist a flip between
$f_i$ and $f_h$ that created $\epsilon(f_h)$. Let $f_p$ be the last flip between $f_i$ and $f_h$ such that $\phi(f_p) = \epsilon(f_h)$, and note that there is an arc from $f_p$ to $f_h$ in $C$. Then $\phi(f_p)$ (which is $\epsilon(f_h)$) crosses $\epsilon(f_i)$, and $\epsilon(f_i)$ is not flipped between $f_i$ and $f_p$ in $C$. Since $\epsilon(f_i)$ is not flipped between $f_i$ and $f_h$, $\epsilon(f_p) \neq \epsilon(f_i)$. By Lemma~\ref{lem:crossing}, there is a directed path from $f_i$ to $f_p$ in $C$. Combining this directed path with the arc from $f_p$ to $f_h$, we obtain a directed path from $f_i$ to $f_h$ in $C$. \qed
\end{proof}

\begin{lemma}\label{lem:flip-flip}
Let $f_i$ and $f_h$, where $i < h$, be two flips in $C$ such that $\epsilon(f_i)=\epsilon(f_h)$. There is a directed path from $f_i$ to $f_h$ in $C$.
\end{lemma}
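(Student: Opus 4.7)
The plan is to chain Lemma~\ref{lem:flip-restore} with single-arc steps along a timeline of all flips whose underlying edge equals the common edge $e := \epsilon(f_i) = \epsilon(f_h)$. List the flips of $F$ in the range $[i,h]$ whose underlying edge is $e$ as $f_{i_0}=f_i, f_{i_1}, \ldots, f_{i_s}=f_h$, arranged so that no flip strictly between two consecutive $f_{i_j}$ and $f_{i_{j+1}}$ has underlying edge $e$. It then suffices to produce a directed path from $f_{i_j}$ to $f_{i_{j+1}}$ in $C$ for each $j$, and concatenate.

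For a fixed $j$, the edge $e$ is removed at time $i_j$ but is present in ${\cal T}_{i_{j+1}-1}$ (since $f_{i_{j+1}}$ flips $e$), so some flip $f_p$ with $i_j < p < i_{j+1}$ must satisfy $\phi(f_p)=e$. Let $f_{p_j}$ be the first such flip. By the minimality of $p_j$, the edge $e$ is absent from ${\cal T}_p$ for every $i_j \leq p < p_j$, and in particular $e$ is not flipped in the open interval $(i_j, p_j)$. Since $\phi(f_{p_j})=e=\epsilon(f_{i_j})$, Lemma~\ref{lem:flip-restore} applied to $f_{i_j}$ and $f_{p_j}$ yields a directed path from $f_{i_j}$ to $f_{p_j}$ in $C$.

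It remains to connect $f_{p_j}$ to $f_{i_{j+1}}$ by a single arc. Condition~(1) of Definition~\ref{def:adjacent} holds immediately because $\phi(f_{p_j})=e=\epsilon(f_{i_{j+1}})$. For condition~(2), by our enumeration no flip strictly between $f_{i_j}$ and $f_{i_{j+1}}$ has underlying edge $e$, so in particular no flip strictly between $f_{p_j}$ and $f_{i_{j+1}}$ has $\epsilon(\cdot)=e=\phi(f_{p_j})$. Hence there is an arc $f_{p_j}\to f_{i_{j+1}}$, and concatenating across $j=0,\ldots,s-1$ produces the desired directed path from $f_i$ to $f_h$.

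The argument is essentially bookkeeping on top of Lemma~\ref{lem:flip-restore}, with no geometric content of its own. The only subtle point, and what I expect to be the main thing to get right, is that the common edge $e$ may be destroyed and recreated many times between $f_i$ and $f_h$; a single application of Lemma~\ref{lem:flip-restore} or a single direct arc is not guaranteed to cover the whole gap, and the decomposition through the intermediate $f_{i_j}$'s is what makes the chain work.
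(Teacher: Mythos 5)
Your proof is correct and is essentially the paper's own argument: the paper phrases the reduction to consecutive flips of the common edge $e$ as a closest-counterexample induction rather than an explicit enumeration $f_{i_0},\ldots,f_{i_s}$, but the core step for each gap --- apply Lemma~\ref{lem:flip-restore} to reach the flip that recreates $e$, then use the single arc from that flip to the next flip of $e$ --- is identical. Your observation that the hypothesis of Lemma~\ref{lem:flip-restore} (that $e$ is not flipped in the intervening range) holds because $e$ is simply absent from the intermediate triangulations is exactly the right justification.
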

\begin{proof}
Assume that the statement is not true, and let $f_i$ and $f_h$ be the closest pair of flips in $C$ such that $\epsilon(f_i)=\epsilon(f_h)$ and there is no directed path from $f_i$ to $f_h$ in $C$. If $\epsilon(f_i)=\epsilon(f_p)=\epsilon(f_h)$ for some $f_p$ between $f_i$ and $f_h$, then by the way $f_i$ and $f_h$ are chosen, there is a directed path from $f_i$ to $f_p$ and a directed path from $f_p$ to $f_h$, which implies that there is a directed path from $f_i$ to $f_h$
--- a contradiction.

Now we can assume that $\epsilon(f_i)$ is not flipped between $f_i$ and $f_h$. Let $f_p$ be the last flip before $f_h$ such that $\phi(f_p)=\epsilon(f_h)$; then there is an arc from $f_p$ to $f_h$.
Since $\phi(f_p)=\epsilon(f_h)=\epsilon(f_i)$, by Lemma~\ref{lem:flip-restore}, there is a directed path from $f_i$ to $f_p$ in $C$. Combining this directed path with the arc from $f_p$ to $f_h$, we obtain a directed path from $f_i$ to $f_h$ in $C$ --- a contradiction. \qed
\end{proof}

\begin{lemma}\label{lem:path}
Let $f_i$ and $f_h$, where $i < h$, be two flips in $C$. If $\phi(f_i) =\epsilon(f_h)$, or if $\phi(f_i)$ and $\epsilon(f_h)$ share a triangle $T$ in ${\cal T}_j$, for some $j$ satisfying $i \leq j < h$, then there is a directed path from $f_i$ to $f_h$ in $C$.
\end{lemma}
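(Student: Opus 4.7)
The plan is to proceed by strong induction on $h - i$, treating the two disjunctive hypotheses together. The base case $h - i = 1$ is immediate: no flip lies strictly between $f_i$ and $f_h$, so $\phi(f_i)$ is trivially not flipped in between and Definition~\ref{def:adjacent} gives a direct arc $f_i \to f_h$ (using that ${\cal T}_j = {\cal T}_{h-1}$ when $j = i$).

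In the inductive step, first dispose of the case $\phi(f_i) = \epsilon(f_h)$. If $\phi(f_i)$ is not flipped between $f_i$ and $f_h$, the arc $f_i \to f_h$ is direct. Otherwise let $f_p$ be the first flip after $f_i$ with $\epsilon(f_p) = \phi(f_i)$; by minimality there is an arc $f_i \to f_p$, and since $\epsilon(f_p) = \epsilon(f_h)$, Lemma~\ref{lem:flip-flip} yields a path $f_p \to f_h$, which we concatenate.

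For the triangle-sharing case with $\phi(f_i) \neq \epsilon(f_h)$, label the edges of $T$ as $e_1 = \phi(f_i)$, $e_2 = \epsilon(f_h)$, and $e_3$, and let $m$ be the smallest index in $(j, h]$ with $\epsilon(f_m) \in \{e_1, e_2, e_3\}$; such $m$ exists because $m = h$ is a candidate. By minimality of $m$, no edge of $T$ is flipped on $(j, m)$, so $T$ survives intact into ${\cal T}_{m-1}$. When $m < h$, the preserved triangle exhibits the shared-triangle hypothesis of Lemma~\ref{lem:path} for the pair $(f_i, f_m)$ (or the equality hypothesis in the sub-subcase $\epsilon(f_m) = e_1$), so the inductive hypothesis provides a path $f_i \to f_m$. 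To reach $f_h$ from $f_m$, either $\epsilon(f_m) = e_2$ and Lemma~\ref{lem:flip-flip} applies directly, or $\epsilon(f_m) \in \{e_1, e_3\}$, in which case an elementary analysis of the quadrilateral flipped by $f_m$ (whose two new triangles share the vertex of $T$ opposite $\epsilon(f_m)$, and hence one of them contains $e_2$) shows that the new diagonal $\phi(f_m)$ shares a triangle with $e_2$ in ${\cal T}_m$, enabling another application of the inductive hypothesis on $(f_m, f_h)$.

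The main obstacle is the remaining subcase $m = h$, which requires a careful timeline analysis of $e_1$. Here $T$ persists all the way to ${\cal T}_{h-1}$, so the only question is whether $e_1$ is flipped on $(f_i, f_h)$; if not, the arc $f_i \to f_h$ is immediate. If yes, let $f_p$ be the first flip after $f_i$ with $\epsilon(f_p) = e_1$, giving arc $f_i \to f_p$, and let $f_r$ be the last flip in $[i, h-1]$ with $\phi(f_r) = e_1$. Because $e_1 \in {\cal T}_{h-1}$ but was removed at $f_p$, such $f_r$ exists with $p < r$; combining the preservation of $T$ past time $j$ with the maximality of $r$ forces $r \leq j$ and shows $e_1$ is unflipped on $(r, h)$, yielding the arc $f_r \to f_h$. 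To bridge $f_p$ to $f_r$, either $e_1$ is unflipped on $(p, r)$ and Lemma~\ref{lem:flip-restore} gives the path directly, or, taking $f_q$ to be the last flip on $(p, r)$ with $\epsilon(f_q) = e_1$, Lemma~\ref{lem:flip-flip} supplies $f_p \to f_q$ and Lemma~\ref{lem:flip-restore} supplies $f_q \to f_r$. Concatenating all pieces produces the desired directed path $f_i \to f_h$ in $C$.
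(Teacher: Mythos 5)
Your proof is correct. It follows the same overall strategy as the paper's argument --- induction on the distance between the two flips (the paper phrases this as a closest-counterexample argument), reduction to arcs of Definition~\ref{def:adjacent}, and repeated use of Lemmas~\ref{lem:flip-restore} and~\ref{lem:flip-flip}, with identical treatment of the case $\phi(f_i)=\epsilon(f_h)$ --- but your case decomposition in the triangle-sharing case is genuinely different. The paper first eliminates the possibility that $\phi(f_i)$ is flipped within $(i,j]$: it must then be restored by some $f_q$ with $q\le j$, so Lemma~\ref{lem:flip-restore} advances the argument to the pair $(f_q,f_h)$, after which the first flip $f_p$ touching $T$ in $(j,h)$ receives a \emph{direct arc} from $f_i$. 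You instead split on the first flip $f_m$ in $(j,h]$ whose underlying edge lies on $T$: when $m<h$ you absorb the entire pre-$j$ history of $e_1$ into an application of the inductive hypothesis to $(f_i,f_m)$, followed by a second application to $(f_m,f_h)$ justified by the same quadrilateral observation the paper uses (flipping an edge of $T$ leaves the new diagonal sharing a triangle with $e_2$); when $m=h$ you handle that history explicitly via the first removal $f_p$ and the last restoration $f_r$ of $e_1$, with the deduction $r\le j$ securing the arc $f_r\to f_h$. I checked the details of both branches, including the $r\le j$ argument and the bridging of $f_p$ to $f_r$ through $f_q$, and they are sound. Your route trades the paper's up-front clean-up for a heavier terminal subcase; the two arguments are of comparable length and rely on the same auxiliary lemmas, so neither is strictly simpler, but yours makes the double use of the inductive hypothesis do more of the work.
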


\begin{proof}
First consider the case when $\phi(f_i) =\epsilon(f_h)$. If $\phi(f_i) =\epsilon(f_h)$ is not flipped between $f_i$ and $f_h$, then there is an arc from $f_i$ to $f_h$ in $C$. Suppose now that $\phi(f_i) =\epsilon(f_h)$ is flipped by another flip between $f_i$ and $f_h$. Let $f_p$ be the first such flip. Then there is an arc from $f_i$ to $f_p$ in $C$, and by Lemma~\ref{lem:flip-flip}, there is a directed path from $f_p$ to $f_h$. Therefore there is a directed path from $f_i$ to $f_h$ in $C$.

Now consider the case where $\phi(f_i)$ and $\epsilon(f_h)$ share a triangle $T$ in ${\cal T}_j$, for some $j$ satisfying $i \leq j < h$, and assume, for the sake of a contradiction, that there is no directed path from $f_i$ to $f_h$ in $C$. Without loss of generality, assume that $f_i$ and $f_h$ are the closest such pair of flips in $C$.

Note that $f_h$ cannot be the flip that immediately follows $f_i$, because in such case there would be an arc from $f_i$ to $f_h$ in $C$.

If $\phi(f_i)$ is flipped by a flip $f_p$ satisfying $i < p \leq j$, then in this case $\phi(f_i)$ must be restored by a flip $f_q$ where $p < q \leq j$ since $\phi(f_i)$ is in ${\cal T}_j$. We can assume that $f_p$ and $f_q$ are the first such flips. Therefore, there is an arc from $f_i$ to $f_p$. Since $\phi(f_q)=\epsilon(f_p)$ and $\epsilon(f_p)$ is not flipped between $f_p$ and $f_q$, by Lemma~\ref{lem:flip-restore}, there is a directed path from $f_p$ to $f_q$ in $C$. Since $\phi(f_q)$ (which is $\phi(f_i)$) and $\epsilon(f_h)$ share a triangle in ${\cal T}_j$, $q \leq j < h$, and the number of flips between $f_q$ and $f_h$ is less than that between $f_i$ and $f_h$, by the way $f_i$ and $f_h$ are chosen, there is a path from $f_q$ to $f_h$ in $C$. Therefore, there is a directed path from $f_i$ to $f_h$ in $C$ --- a contradiction.

Now we can assume that $\phi(f_i)$ is not flipped by a flip $f_p$ satisfying $i < p \leq j$. If no edge of the triangle $T$ is flipped by a flip $f_p$ between $f_j$ and $f_h$, then there is an arc from $f_i$ to $f_h$ in $C$--- a contradiction.

Finally, consider the case when an edge of $T$ is flipped by $f_p$ for the first time, where $j < p < h$. This means that $\phi(f_i)$ is in ${\cal T}_{p-1}$, and hence there is an arc from $f_i$ to $f_p$.
If $\epsilon(f_p) \neq \epsilon(f_h)$, then $\epsilon(f_p)$ and $\epsilon(f_h)$ share a triangle in ${\cal T}_{p-1}$, which means $\phi(f_p)$ and $\epsilon(f_h)$ share a triangle in ${\cal T}_p$.
Since the number of flips between $f_p$ and $f_h$ is less than that between $f_i$ and $f_h$, by the way $f_i$ and $f_h$ are chosen, there is a path from $f_p$ to $f_h$ in $C$. Combining this path with the arc from $f_i$ to $f_p$ gives a directed path from $f_i$ to $f_h$ in $C$--- a contradiction. On the other hand, if $\epsilon(f_p) = \epsilon(f_h)$, then by Lemma~\ref{lem:flip-flip}, there is a directed path from $f_p$ to $f_h$. Combining this path with the arc from $f_i$ to $f_p$ gives a directed path from $f_i$ to $f_h$ in $C$ --- a contradiction. \qed \end{proof}

Lemmas~\ref{lem:crossing}--\ref{lem:path} can be summarized and strengthened by the following lemma:

\begin{lemma}\label{lem:allpath}
Let $f_i$ and $f_h$, where $i < h$, be two flips in $C$. If one of the following conditions is true, then there is a directed path from $f_i$ to $f_h$ in $C$:
\begin{itemize}
\item[(1)] $\phi(f_h)$ crosses $\epsilon(f_i)$.
\item[(2)] $\phi(f_h)=\epsilon(f_i)$.
\item[(3)] $\epsilon(f_i)=\epsilon(f_h)$.
\item[(4)] $\phi(f_i) =\epsilon(f_h)$, or $\phi(f_i)$ and $\epsilon(f_h)$ share a triangle $T$ in ${\cal T}_j$, for some $j$ satisfying $i \leq j < h$.
\end{itemize}
\end{lemma}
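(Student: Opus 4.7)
The plan is to observe that Lemma~\ref{lem:allpath} is essentially a repackaging of the four preceding lemmas (Lemmas~\ref{lem:crossing}, \ref{lem:flip-restore}, \ref{lem:flip-flip}, \ref{lem:path}), but with the auxiliary hypotheses (``$\epsilon(f_i) \neq \epsilon(f_h)$'' and ``$\epsilon(f_i)$ is not flipped between $f_i$ and $f_h$'') dropped from the statements of conditions (1) and (2). So the proof reduces to showing that when these auxiliary hypotheses fail, we can still stitch together a directed path using Lemma~\ref{lem:flip-flip} as a bridge.

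For condition (3), the statement is identical to Lemma~\ref{lem:flip-flip}, and for condition (4) it is identical to Lemma~\ref{lem:path}, so those cases are immediate. For condition (1), I would split into subcases: if $\epsilon(f_i)=\epsilon(f_h)$, then condition (3) already applies. Otherwise $\epsilon(f_i)\neq\epsilon(f_h)$, and if $\epsilon(f_i)$ is not flipped between $f_i$ and $f_h$, then Lemma~\ref{lem:crossing} directly yields the path. If instead $\epsilon(f_i)$ is flipped between $f_i$ and $f_h$, let $f_p$ be the last such flip; then $\epsilon(f_p)=\epsilon(f_i)$ so Lemma~\ref{lem:flip-flip} yields a directed path from $f_i$ to $f_p$, and since $\epsilon(f_p)=\epsilon(f_i)$ is not flipped between $f_p$ and $f_h$, $\epsilon(f_p)\neq\epsilon(f_h)$ (otherwise $\epsilon(f_i)=\epsilon(f_h)$), and $\phi(f_h)$ crosses $\epsilon(f_p)$, Lemma~\ref{lem:crossing} yields a directed path from $f_p$ to $f_h$; concatenating gives the desired path.

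Condition (2) is handled analogously: if $\epsilon(f_i)$ is not flipped between $f_i$ and $f_h$, Lemma~\ref{lem:flip-restore} applies directly. Otherwise, taking $f_p$ to be the last flip between $f_i$ and $f_h$ with $\epsilon(f_p)=\epsilon(f_i)$, we get $\epsilon(f_p)=\phi(f_h)$, and Lemma~\ref{lem:flip-flip} provides a directed path from $f_i$ to $f_p$ while Lemma~\ref{lem:flip-restore} (now applicable since $\epsilon(f_p)$ is not flipped between $f_p$ and $f_h$) provides a directed path from $f_p$ to $f_h$; concatenating gives the desired path.

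I don't anticipate a real obstacle here since all the geometric and combinatorial work has been done in the previous lemmas; the only mild subtlety is ensuring that the intermediate flip $f_p$ lies in $C$, but this is automatic once a directed path between $f_i$ (which is in $C$) and $f_p$ is established via Lemma~\ref{lem:flip-flip}, because $C$ is a (weakly) connected component of ${\cal D}_F$.
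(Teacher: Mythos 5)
Your proof is correct and takes essentially the same route as the paper, which disposes of the lemma in one line by asserting that it follows from Lemmas~\ref{lem:crossing}--\ref{lem:path}, with the auxiliary hypothesis in Lemmas~\ref{lem:crossing} and~\ref{lem:flip-restore} (that $\epsilon(f_i)$ is not flipped between $f_i$ and $f_h$) removable ``because of Lemma~\ref{lem:flip-flip}.'' What you have done is to actually supply the bridging argument the paper leaves implicit: pick the last flip $f_p$ between $f_i$ and $f_h$ with $\epsilon(f_p)=\epsilon(f_i)$, get a path $f_i\leadsto f_p$ from Lemma~\ref{lem:flip-flip}, and a path $f_p\leadsto f_h$ from Lemma~\ref{lem:crossing} or~\ref{lem:flip-restore} applied to the pair $(f_p,f_h)$, for which the auxiliary hypotheses now hold by choice of $f_p$. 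You also correctly handle the other dropped hypothesis of Lemma~\ref{lem:crossing}, namely $\epsilon(f_i)\neq\epsilon(f_h)$, by routing that subcase through condition~(3); the paper's one-line proof does not explicitly mention this, so your write-up is, if anything, more careful. The only weak spot is your closing remark: asserting that $f_p\in C$ is ``automatic once a directed path \dots\ is established via Lemma~\ref{lem:flip-flip}'' is circular, since Lemma~\ref{lem:flip-flip}, as stated, already presupposes both of its flips lie in $C$. The correct justification is the normalization of $F$ (Corollary~\ref{cor:ts} and Lemma~\ref{lem:essential}): the flips of $C$ occur as a consecutive block in $F$, so any flip $f_p$ with $i<p<h$ and $f_i,f_h\in C$ automatically lies in $C$. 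This is a minor point of exposition, not a gap in the argument.
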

\begin{proof}
This lemma follows from Lemmas~\ref{lem:crossing},~\ref{lem:flip-restore},~\ref{lem:flip-flip},~\ref{lem:path}. Note that the condition in the statements of Lemmas~\ref{lem:crossing},~\ref{lem:flip-restore} that $\epsilon(f_i)$ is not flipped between $f_i$ and $f_h$ can be removed because of Lemma~\ref{lem:flip-flip}. \qed
\end{proof}

\section{The algorithm}\label{sec:algorithm}
In Section~\ref{sec:structural}, we showed that we can represent a normalized solution $F$ to an instance ${\cal I}$ of {\sc Flip Distance} using a DAG ${\cal D}_F$, such that any topological sorting of ${\cal D}_F$
is a valid solution to ${\cal I}$.  (Recall that, by definition, a normalized solution is minimum.) Therefore, to solve ${\cal I}$, it suffices to compute a topological sorting of ${\cal D}_F$.
There are two major difficulties that need to be overcome before we can compute a topological sorting of ${\cal D}_F$ within the desired time upper bound.

First, given an instance ${\cal I}$ of {\sc Flip Distance}, we do not know the DAG ${\cal D}_F$ of a normalized solution $F$ to ${\cal I}$ (in case it exists). Therefore, ${\cal D}_F$ can only serve as a representation of $F$ to help us reason about $F$ and eventually construct it. Second, there is a difference between what goes on in the DAG ${\cal D}_F$ and what goes on in a current triangulation ${\cal T}_{current}$, in the sense that the flips in ${\cal D}_F$ do not all correspond to existing edges in ${\cal T}_{current}$, and in fact they usually do not. For example, if $f_i$ and $f_j$ are two adjacent flips in ${\cal D}_F$ ($f_i \rightarrow f_j$), and if the underlying edge $\epsilon(f_j)$ of $f_j$ is an edge in ${\cal T}_{current}$, the edge $\epsilon(f_i)$ may not be present in ${\cal T}_{current}$. Therefore, assuming that we know a certain edge $\epsilon(f_j)$ is present in ${\cal T}_{current}$ and that it needs to be flipped (\eg, $\epsilon(f_j)$ is a changed edge), it is not clear which edge in ${\cal T}_{current}$ should be flipped first.

There is a simple $\FPT$ algorithm for {\sc Flip Distance} that runs in $\Ostar(c^{k^2})$ time, for some constant $c$, and that is much simpler to obtain using the structural results we derive in this paper.
Describing this algorithm will perhaps give more insight into the problem and the solution that we ultimately present. This (simpler) algorithm starts from an edge in the current triangulation that corresponds to a flip in the DAG representing the remaining solution (initially this edge is a changed edge), and grows a BFS-like tree of size $c^k$ searching for the next edge in the current triangulation that can be correctly flipped (corresponding to a source node in the DAG); the algorithm then flips this edge. (Lemma~\ref{lem:walk} guarantees that the next edge can be found using a search tree of depth $k$ and constant outbranching, and hence of size $c^k$, for some constant $c$.) Repeating this process $k$ times gives an $\Ostar((c^{k})^k) = \Ostar(c^{k^2})$-time algorithm for the problem.

Our goal, however, is to obtain an $\Ostar(c^{k})$-time algorithm for the problem, for some constant $c$. Achieving this goal turns out to be quite involved.  We did so by analyzing the relation between the DAG associated with a solution to a problem instance and the changing structure of the underlying triangulations.

\subsection{Overview of the algorithm}
In this subsection, we give an intuitive description of how the algorithm works. Let $({\cal T}_{initial}, {\cal T}_{final}, k)$ be an instance of
{\sc Flip Distance}. In order to solve the instance, by the results and discussions in Section~\ref{sec:structural}, it suffices for the algorithm to perform a sequence of at most $k$ flips that is a topological sorting of the DAG ${\cal D}_F$ associated with a normalized solution $F$ to the instance. The algorithm is a search-tree algorithm that searches for such a sequence of flips; we explain next how the algorithm does that.

Each node $\alpha$ in the search tree $\Upsilon$ of the algorithm is associated with a triplet $({\cal T}_{current}, e_{current}, \Lambda)$, where ${\cal T}_{current}$ and $e_{current} \in {\cal T}_{current}$ are the current triangulation and edge that the algorithm is processing at $\alpha$, respectively, and $\Lambda$ is a stack used by the algorithm as an auxiliary structure. At a node $\alpha$ of $\Upsilon$ associated with triplet $({\cal T}_{current}, e_{current}, \Lambda)$, the algorithm branches into constantly-many branches, associated in a one-to-one fashion with the constantly-many children of $\alpha$. Each branch corresponds to an {\em action} performed by the algorithm, chosen from within a set of constantly-many actions (given in Subsection~\ref{subsec:actions}). An action can be a {\em flip-action}, in which edge $e_{current}$ is flipped to result in a triangulation, and an edge in the resulting triangulation is chosen to be processed next along that branch; or it can be a {\em move-action}, which results in the algorithm processing next along that branch an edge in ${\cal T}_{current}$ that is adjacent to $e_{current}$. Both action categories may possibly consult the stack $\Lambda$ and/or modify it by popping or pushing an edge. An action results in a new triplet that is associated with the child of $\alpha$ corresponding to the branch/action. The root-node of $\Upsilon$, where the algorithm starts its search, is associated with the triplet $({\cal T}_{initial}, e, \Lambda=\emptyset)$, where $e$ is any changed edge in ${\cal T}_{initial}$.

We will prove in Subsection~\ref{sec:component} that $({\cal T}_{initial}, {\cal T}_{final}, k)$ is a yes-instance of {\sc Flip Distance} if and only if there is a root-leaf path in $\Upsilon$ of length $O(k)$, on which the number of flip-actions is at most $k$, and such that the triangulation at the leaf of this path (\ie, in the triplet at the leaf) is ${\cal T}_{final}$. Since each node in $\Upsilon$ has constantly-many children, this will yield a search-tree algorithm of running time $\Ostar(c^{k})$.

\subsection{The algorithm and its actions}\label{subsec:actions}
  The root-node of the search-tree $\Upsilon$ of the algorithm is associated with the triplet $({\cal T}_{initial}, e, \Lambda=\emptyset)$, where $e$ is any changed edge in ${\cal T}_{initial}$.  At a node $\alpha$ in $\Upsilon$, associated with triplet $({\cal T}_{current}, e, \Lambda)$, the algorithm branches into one or more branches, each corresponding to performing an action $\sigma$ by the algorithm. Each of these branches corresponds to a child $\alpha'$ of $\alpha$ in $\Upsilon$, and will be associated with a triplet $({\cal T'}, e', \Lambda')$. Note that a triangulation ${\cal T}_{current}$ at $\alpha$ is the result of applying the actions along the path in $\Upsilon$ from the root to $\alpha$.  The actions/branches performed at a (non-leaf) node $\alpha$ in $\Upsilon$ associated with triplet $({\cal T}_{current}, e, \Lambda)$ are the following, which we classify into five types\footnote{If an action $\sigma$ involves flipping an edge $e$, then $\sigma$ is performed only when a flip to $e$ is admissible.}:

\begin{itemize}
\item[(i)]  For each edge $e'$ that shares a triangle with $e$ in ${\cal T}_{current}$: branch into a node $\alpha'$ in $\Upsilon$ associated with the triplet $({\cal T'} ={\cal T}_{current}, e', \Lambda'=\Lambda)$.
\item[(ii)] For each edge $e'$ that shares a triangle with $e$ in ${\cal T}_{current}$: branch into a node $\alpha'$ of $\Upsilon$ associated with the triplet $({\cal T'}, e', \Lambda'=\Lambda)$, where ${\cal T'}$ is the triangulation obtained from ${\cal T}_{current}$ by flipping $e$.
\item[(iii)]  For each edge $e'$ that shares a triangle with $e$ in ${\cal T}_{current}$: branch into a node $\alpha'$ of $\Upsilon$ associated with the triplet $({\cal T'}, e', \Lambda')$, where ${\cal T'}$ is the triangulation obtained from
            ${\cal T}_{current}$ by flipping $e$, and $\Lambda'$ is obtained from $\Lambda$ by pushing the edge $\phi(e)$ (on top of $\Lambda$).

\item[(iv)]  Branch into a node $\alpha'$ of $\Upsilon$ associated with the triplet $({\cal T'}, e', \Lambda)$, where ${\cal T'}$ is the triangulation obtained from
            ${\cal T}_{current}$ by flipping $e$, and $e'$ is the edge on the top of $\Lambda$.

\item[(v)] Branch into a node $\alpha'$ of $\Upsilon$ associated with the triplet $({\cal T'}, e', \Lambda')$, where ${\cal T'}$ is the triangulation obtained from
            ${\cal T}_{current}$ by flipping $e$, $e'$ is the edge on the top of $\Lambda$, and $\Lambda'$ is obtained by popping $e'$ from $\Lambda$.
\end{itemize}

\begin{proposition}\label{prop:choices}
The total number of actions/branches at any node in the search tree is at most 14.
\end{proposition}
\begin{proof}
At any node in the search tree, there are at most 4 actions/branches for each of types (i)-(iii), and 1 action for each of types (iv)-(v). \qed
\end{proof}

We are now ready to present the algorithm.  Let $({\cal T}_{initial}, {\cal T}_{final}, k)$ be an instance of {\sc Flip Distance}.  The algorithm searches for a sequence of at most $k$ flips that corresponds to a topological sorting of the DAG ${\cal D}_F$, associated with a normalized solution $F$ to the instance (in case such a solution exists). We will show in the next section that if $F$ exists, then there is a sequence of actions of length at most $11k$, in which the flip-actions correspond to a topological sorting of ${\cal D}_F$. Therefore, all that the algorithm needs to do is search the space of all possible sequences of actions of length at most $11k$ for such a sequence. If such a sequence exists, the algorithm accepts. Otherwise, the algorithm rejects.

To implement the above, the algorithm starts by ordering the changed edges in ${\cal T}_{initial}$ arbitrarily; denote this ordering by $\Gamma$. The algorithm then tries (enumerates) each positive integer $t$, such that $t \leq k$, as the correct number of components in ${\cal D}_F$. For a fixed $t$, the algorithm tries (enumerates) each tuple of integers $(k_1, \ldots, k_t)$ satisfying $k_1, \ldots, k_t \geq 1$ and $k_1+k_2+\cdots +k_t \leq k$, as the tuple in which $k_{\ell}$ corresponds to the correct number of flips in component $C_{\ell}$ of ${\cal D}_F$, for $\ell=1, \ldots, t$. For a fixed $t$ and a fixed tuple $(k_1, \ldots, k_t)$, the algorithm constructs a search tree $\Upsilon$ over $t$ branching stages, $\ell=1, \ldots, t$, as follows, where in stage $\ell$ the algorithm searches for a sequence of flips that corresponds to a topological sorting of component $C_{\ell}$. Let $\alpha$ be the node of $\Upsilon$ that the algorithm is at during its search after the first $\ell-1$, $\ell=1, \ldots, t$, stages have been completed; initially $\alpha$ is the root of $\Upsilon$. Let ${\cal T}$ be the triangulation at $\alpha$ (\ie, in the triplet associated with $\alpha$), which is the result of applying all the actions along the path from the root of $\Upsilon$ to $\alpha$; initially, at the root of $\Upsilon$, ${\cal T} = {\cal T}_{initial}$. At the beginning of stage $\ell$, at node $\alpha$, the algorithm picks the first changed edge $e \in \Gamma$ with respect to ${\cal T}$ and ${\cal T}_{final}$ (\ie, $e$ is in ${\cal T}$ but not in ${\cal T}_{final}$). The algorithm associates with $\alpha$ the triplet $({\cal T}, e, \Lambda=\emptyset)$, and then branches into all possible sequences ${\cal S}$ of actions (of types (i)-(v) described before) of length at most $11k_{\ell}$ in which the number of flip-actions is at most $k_{\ell}$. After the last stage $\ell=t$ has been completed, if at a leaf-node of $\Upsilon$ we reached triangulation ${\cal T}_{final}$, then the algorithms accepts the instance and stops. If no branch in the search tree accepts, the algorithms rejects the instance. The algorithm, referred to as \algo, is described schematically in {\bf Algorithm~1} below.

\renewcommand{\figurename}{Algorithm}
\setcounter{figure}{0}

\begin{algorithmnew}{The algorithm for {\sc Flip Distance}.}{alg:wholealgo}
\algtitle{\algo}
\begin{codeblock}
\step Order the changed edges in ${\cal T}_{initial}$ arbitrarily and denote this ordering by $\Gamma$.

\step For each $t=1, \ldots, k$, and for each tuple $(k_1, \ldots, k_t)$ satisfying $k_1+k_2+\cdots +k_t \leq k$ do:

\begin{itemize}
\step[2.1.]  Start a new search tree $\Upsilon$, where ${\cal T}_{initial}$ is the triangulation at the root.
\step[2.2.]  Perform $t$ branching stages as follows, where the branching in stage $\ell$, for $\ell =1$ to $t$, is performed starting from {\em each} leaf-node $\alpha$ of the search tree resulting at the end of stage $\ell-1$ (if $\ell = 1$ then there is only one leaf-node $\alpha$, which is the root of the search tree):
\begin{itemize}
\item[2.2.1.] Let ${\cal T}$ be the triangulation at node $\alpha$ in $\Upsilon$.
\item[2.2.2.] Let $e$ be the first edge in $\Gamma$ with respect to the ordering (if $\ell =1$ then $e$ is the first edge in $\Gamma$) such that $e$ is in ${\cal T}$ but not in ${\cal T}_{final}$. Do the following:
 \begin{itemize}
 \item[2.2.2.1.] Associate the triplet $({\cal T}, e, \Lambda=\emptyset)$ with node $\alpha$ of the search tree.
 \item[2.2.2.2.] Branch into all possible sequences ${\cal S}$ of actions of length at most $11k_{\ell}$, in which the number of flips is at most $k_{\ell}$. For each such sequence ${\cal S}$ do the following:
\begin{itemize}
\item[2.2.2.2.1.] Let ${\cal T'}$ be the resulting triangulation after performing ${\cal S}$ starting at $\alpha$.
\item[2.2.2.2.2.] If $\ell =t$ and ${\cal T'} = {\cal T}_{final}$ then accept the instance and exit the (whole) algorithm.
\end{itemize}
\end{itemize}
\end{itemize}
\end{itemize}
\step Reject the instance.
\end{codeblock}
\end{algorithmnew}

\renewcommand{\figurename}{Fig.}
\setcounter{figure}{1}

The next section is dedicated to proving the correctness of \\ \algo~and analyzing its running time.

\subsection{The correctness and analysis of \algo}\label{sec:component}
 Let $F=\langle f_1, \ldots, f_r \rangle$, where $r \leq k$, be a normalized solution to an instance $({\cal T}_{initial}, {\cal T}_{final}, k)$ of {\sc Flip Distance}. Let $C$ be a component of ${\cal D}_F$. By Corollary~\ref{cor:ts}, we can assume that all the flips in $C$ appear at the beginning of $F$, that is, form a prefix of $F$; let $F_C = \langle f_1, \ldots, f_t \rangle$ be the prefix of $F$ corresponding to the flips in $C$. The correctness of \algo~and the claimed upper bound on its running time will follow easily from the following key theorem, which most of this section is dedicated to proving:

\begin{theorem}\label{thm:traverse}
Let $C$ be a component of ${\cal D}_F$. There is a sequence of actions of length at most $11|V(C)|$ that, starting from any changed edge $\epsilon(f_h)$, where $f_h \in C$, performs the flips in $C$ in a topologically-sorted order.
\end{theorem}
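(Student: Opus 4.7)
My plan is to realize the required sequence as the execution of the procedure \textbf{Traversal}$(C,f_h,\emptyset)$ defined in Subsection~\ref{sec:component}, and to argue separately (i) that its flip-bearing actions form a topological sort of $C$, and (ii) that its total length is at most $2|V(C)|$.

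For (i), I would proceed by strong induction on $|V(H)|$ over each recursive invocation \textbf{Traversal}$(H,f_h,P)$. The base case is immediate for singletons (one flip). In the inductive step, Lemma~\ref{lem:walk} supplies the type-(i) walk in Step~1 from $\epsilon(f_h)$ to $\epsilon(f_s)$ for some source $f_s$ of $H$; Step~2 legally flips $f_s$ since no predecessor of $f_s$ remains in $H$; and Lemma~\ref{lem:sharetriangle} guarantees that $\phi(f_s)$ shares a triangle with each child's entry edge $\epsilon(f_{h_p})$ in the current triangulation, so the algorithm can enter each child via the ``move'' of a type-(iii) flip (for the first child) or via a single type-(i) action following a type-(iv)/(v) jump back to $\phi(f_s)$ (for subsequent children). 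Prepending $f_s$ to the concatenation of the per-child topological sorts produced by the inductive hypothesis yields a topological sort of $H$.

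For (ii), I would count flip actions and type-(i) actions separately. Each node of $C$ is removed by exactly one flip action, giving $|V(C)|$ flip actions. The type-(i) actions are of two kinds: (a) backbone walks, one type-(i) action per edge of the backbone traversed in Step~1, and (b) an inter-component transition for each non-first child, a single type-(i) move from $\phi(f_s)$ to the child's entry edge. Letting $B(H)$ and $N(H)$ be the totals of (a) and (b) produced throughout the Traversal on $H$'s subtree, the overall action count is $|V(C)|+B(C)+N(C)$.

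The main obstacle, and the heart of the proof, is to show $B(H)+N(H)\le |V(H)|-1$, which combined with the flip count gives the desired $\le 2|V(C)|-1<2|V(C)|$. My plan is a potential-function induction with $\delta(H)=|V(H)|-1-B(H)-N(H)\ge 0$; unrolling the Traversal recurrence yields $\delta(H)=2-|V(B_H)|+\sum_p\delta(H_p)$, so the nontrivial case is $|V(B_H)|\ge 3$, which requires $\sum_p\delta(H_p)\ge|V(B_H)|-2$. This is secured by the Step~3(b) entry-point rule: because the path $P$ inherits the suffix $b_2,\ldots,b_\ell$ of the parent's backbone, the entry point of the component $H_1$ carrying this suffix is forced to be $b_2$, which is a source of $H_1$; so $H_1$'s backbone is trivial, and the same phenomenon recurs on the inherited chain, accumulating $+1$ of slack at each of its $|V(B_H)|-2$ non-singleton levels. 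Lemma~\ref{lem:forest} guarantees that the resulting counting is consistent (no double-counted backbone edges), closing the induction.
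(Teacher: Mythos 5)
Your part (i) matches the paper, and your accounting identity $\delta(H)=2-|V(B_H)|+\sum_p\delta(H_p)$ is a correct unrolling of the recursion. The gap is in the justification of the inductive step $\sum_p\delta(H_p)\ge |V(B_H)|-2$, which rests on the claim that the entry point of the component $H_1$ inheriting the suffix $b_2,\ldots,b_\ell$ of $P$ is forced to be $b_2$ and that $b_2$ is a source of $H_1$, so that $H_1$'s backbone is trivial. Both halves of this claim fail. First, Step~3(b) chooses as entry point the flip of \emph{minimum index} that is adjacent to $f_s$ and has a directed path $P'$ to $f_b=b_2$; while $b_2$ is always a candidate, a different node $f_{h'}$ of smaller index may qualify, in which case $P'$ is prepended to $P$ and the next backbone runs from some source of $H_1$ to $f_{h'}$ --- it need not be trivial or short. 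Second, even when the entry point is $b_2$, nothing makes $b_2$ a source of $H_1$: it may have in-neighbors in $H$ other than $f_s$, so removing $f_s$ does not turn it into a source and Step~1 will produce a genuine, possibly long, backbone. Hence the claimed ``$+1$ of slack at each non-singleton level of the inherited chain'' is unsubstantiated, and your induction does not close.

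The inequality you need is in fact true, but the paper derives it globally rather than level-by-level: Lemma~\ref{lem:forest} shows that \emph{all} backbones produced during the traversal are edge-disjoint and their union is a forest on a subset of $V(C)$ (with a later backbone meeting an earlier one in at most its sink). Consequently the total length of all backbone walks is at most the number of edges of this forest, and the number of stack-jump transitions (your $N$) is at most the number of its trees; since edges plus trees equals the number of forest nodes, the sum is at most $|V(C)|$, giving the $2|V(C)|$ bound directly. In your write-up Lemma~\ref{lem:forest} appears only as a consistency check, but it is actually the engine of the count; replacing your local slack argument by this global forest count (applied, if you insist on induction, to the sub-forest of backbones generated within the subtree of calls rooted at $H$) repairs the proof.
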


Clearly, if we can show the existence of the sequence in Theorem~\ref{thm:traverse}, then there is a branch (in step 2.2.2.2) in the search tree depicted by \\ \algo~that will find this sequence. Since there are 14 possible actions at each node in the search tree, this implies that the number of branches in the search tree that \algo~branches into in order to find the desired sequence is $\Oh(14^{11|V(C)|})$, and in order to find the desired sequence for the whole DAG ${\cal D}_F$ is
$\Oh(14^{11|V({\cal D}_F)|})=\Oh(14^{11k})$.

To prove Theorem~\ref{thm:traverse}, we define a spanning subgraph $J_C$ of the underlying graph of $C$ recursively. We then exhibit a sequence ${\cal S}$ of actions for \algo~to perform that can be depicted by a recursive traversal of $J_C$. By that we mean that the actions performed by the algorithm (\ie, the actions in ${\cal S}$) in the triangulations correspond to a traversal of the edges and nodes of $J_C$, and such that the sequence of flips performed by the algorithm (\ie, in ${\cal S}$) is a topological sorting of $C$.  We initialize $J_C$ to be empty, and we start the recursive definition of $J_C$ at a node in $C$ that corresponds to any changed edge in the current triangulation. We will then add edges and nodes to $J_C$, and recurse on the connected components of the graph resulting from $C$ after a source node in $C$ has been removed. Since during the recursion nodes and edges get removed from $C$, the resulting graph of $C$ may consist of several connected components that we will refer to as ``chunks'' (formally defined in Definition~\ref{def:chunk}), in order to distinguish them from the components of ${\cal D}_F$. Assume that the current triangulation is ${\cal T}$ when we recurse on a chunk $H$ to define its spanning subgraph $J_H$. The recursive call starts at a node $f_h$ in $H$ that we call the ``entry point'' of $H$. At the top level of the recursion, $C$ is the only chunk (in the recursive definition), and the entry point of $H=C$ is any node in $C$ corresponding to any changed edge in the current triangulation. We will define in Lemma~\ref{lem:walk} a directed path $B=\langle b_1=f_s, \ldots, b_{\ell}=f_h\rangle$ in $H$ from a source node $f_s$ in $H$ to the entry point $f_h$ of $H$. With the path $B$, we correspond a {\em walk} $W$, defined in Lemma~\ref{lem:walk}, in the current triangulation from $\epsilon(f_h)$ to $\epsilon(f_s)$. We add $B$ to $J_H$, we add the edges between $f_s$ and the entry point of each chunk in $H-f_s$ to $J_H$, and we recurse on the chunks of $H-f_s$ to complete the recursive definition of $J_H$. The corresponding actions in ${\cal S}$ (with the recursive definition of $J_H$) consist of those for performing the walk $W$, flipping $\epsilon(f_s)$, and recursively performing the sequence of actions corresponding to the traversals of the chunks in $H-f_s$.  Note that to flip a single edge, the algorithm needs to perform a sequence of actions that corresponds to a walk (in the current triangulation) to a source node in $H$. Therefore, if we are not careful in how we do the traversal of $C$, the length of all these walks could be quadratic in $k$.  To ensure that when the algorithm is done performing the sequence of actions in a chunk it can continue with the other chunks, the algorithm uses the stack $\Lambda$ to store the edge $\phi(f_s)$, resulting from flipping the ``connecting point'' $f_s$ of all these chunks, so that the algorithm, after performing all the flips in a chunk of $H-f_s$, can go back by a single action to $\phi(f_s)$. We formalize next the concepts discussed above, starting with the following definitions:

\begin{definition}\label{def:walk}
A {\em walk} in a triangulation ${\cal T}$ (starting from an edge $e \in {\cal T}$) is a sequence of edges of ${\cal T}$ (starting from $e$) such that any two consecutive edges in this sequence share a triangle in ${\cal T}$. Note that a walk can be performed by a sequence of actions all of which are of type (i).
\end{definition}

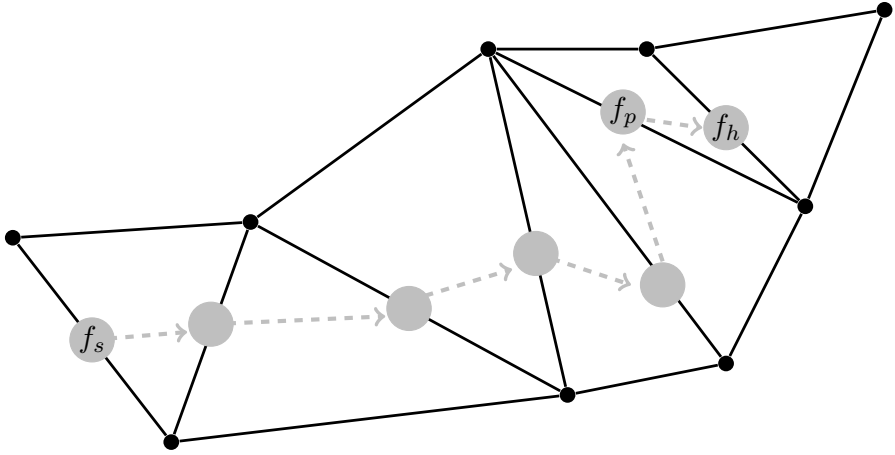
\begin{figure}
\begin{center}
 \resizebox{1\textwidth}{!}{
\begin{tikzpicture}
[vertex/.style={fill,circle,inner sep = 2pt}]

\node (a) at (9, 2) [vertex]{};

\node (d) at (13,0) [vertex] {};
\node (b) at (11,2) [vertex]{};
\node (c) at (14,2.5) [vertex] {};
\node (e) at (12,-2) [vertex] {};
\node (f) at (5,-3) [vertex] {};
\node (g) at (3,-0.4) [vertex] {};

\node (x) at (6,-0.2) [vertex] {};
\node (y) at (10,-2.4) [vertex] {};

\draw[line width=1] (a) -- (b) -- (c) -- (d);
\draw[line width=1] (a) -- (d);
\draw[line width=1] (d) -- (e) -- (a);
\draw[line width=1] (f) -- (y) -- (e);
\draw[line width=1] (a) -- (x) -- (g) -- (f);
\draw[line width=1] (x) -- (y);
\draw[line width=1] (x) -- (f);
\draw[line width=1] (a) -- (y);

\draw[black, line width=1] (b) -- (d);

\filldraw[fill=lightgray, draw=lightgray] (12, 1) circle[radius= 0.28cm] node {\large $f_h$};
\filldraw[fill=lightgray, draw=lightgray] (10.7, 1.2) circle[radius= 0.28cm] node {\large $f_p$};
\filldraw[fill=lightgray, draw=lightgray] (11.2, -1) circle[radius= 0.28cm] ;
\filldraw[fill=lightgray, draw=lightgray] (4, -1.7) circle[radius= 0.28cm] node {\large $f_s$};
\filldraw[fill=lightgray, draw=lightgray] (5.5, -1.5) circle[radius= 0.28cm] node {};
\filldraw[fill=lightgray, draw=lightgray] (8, -1.3) circle[radius= 0.28cm] node {};
\filldraw[fill=lightgray, draw=lightgray] (9.6, -0.6) circle[radius= 0.28cm] node {};



\draw[->,lightgray, line width=1.5,dashed] (11, 1.1)  -- (11.7, 1);
\draw[->,lightgray, line width=1.5,dashed] (11.2, -0.7) -- (10.7, 0.8);
\draw[->,lightgray, line width=1.5,dashed] (4, -1.7) -- (5.2, -1.6);
\draw[->,lightgray, line width=1.5,dashed]  (5.5, -1.5) -- (7.7, -1.4);
\draw[->,lightgray, line width=1.5,dashed] (8, -1.2) -- (9.3, -0.8);
\draw[->,lightgray, line width=1.5,dashed] (9.6, -0.6) -- (10.8, -1);
\end{tikzpicture}
}
\end{center}
\caption{Illustration for the proof of Lemma~\ref{lem:walk}. The black vertices and edges are vertices and edges of ${\cal T}$, respectively. The gray circles represent the nodes/flips in $C$, the black edges (through them) are their underlying edges in ${\cal T}$, and the dashed edges are arcs in $C$. }
\label{fig:walk}
\end{figure}

\begin{definition}\rm \label{def:chunk}
A subgraph $H$ of $C$ is called a \emph{chunk} of $C$ if either $H=C$, or if there exists a sequence $s_1, \ldots, s_{j}$ of nodes of $C$ such that: (1) $H$ is a component of $C -\{s_1, \ldots, s_j\}$; and (2) $s_1$ is a source node of $C$ and $s_i$ is a source node of $C-\{s_1, \ldots, s_{i-1}\}$, for $i=2,\ldots, j$. That is, $H$ is a chunk if $H$ is $C$ itself, or if it is a component of a subgraph obtained from $C$ by ``repeatedly removing source nodes,'' starting with a source node of $C$, and continuing each time with a source node of the resulting subgraph of $C$. (Note that $s_1$, the first node in such a sequence, is a source node of $C$, but the remaining nodes $s_2, \ldots, s_j$ in the sequence  may not be source nodes of $C$.)
\end{definition}

\begin{lemma}\rm\label{lem:walk}
Suppose that $H$ is a chunk obtained from $C$ by removing a (possibly empty) sequence of nodes $F^-$. (Note that each node in $F^-$ is a source node in the DAG resulting from $C$ after removing the preceding nodes in $F^-$.) Let $f_h$ be a node in $H$ such that $\epsilon(f_h)$ is an edge in the triangulation ${\cal T}$ resulting after performing the sequence of flips $F^-$. There is a source node $f_s$ in $H$ satisfying:
(1) there is a walk $W$ in ${\cal T}$ from $\epsilon(f_h)$ to $\epsilon(f_s)$;
(2) there is a directed path $B$ from $f_s$ to $f_h$ in $H$ that we refer to as the {\em backbone} of $H$ (note that $B$ is not necessarily unique); and
(3) the length of $W$ is at most that of $B$.
\end{lemma}

\begin{proof}
For any node $f_i$ in $H$, define the {\em height} of $f_i$ to be the length of a {\em longest} path from {\em any} source node of $H$ to $f_i$. The proof is by induction on the height of $f_h$ in $H$. We refer the reader to Fig.~\ref{fig:walk} for illustration.

If the height of $f_h$ is 0 then $f_h$ is a source node in $H$, the path $B$ consists of $f_h$, and the walk $W$ has length 0. Therefore, the statement is trivially true for the base case.

Now assume that the height of $f_h$ is not zero, and hence $f_h$ is not a source node in $H$. Since $\epsilon(f_h)$ is an edge in ${\cal T}$, let $Q_{\epsilon(f_h)}$ be the quadrilateral associated with $\epsilon(f_h)$ in ${\cal T}$. We first argue that there must exist a flip in $H$ to a boundary edge of $Q_{\epsilon(f_h)}$. Since $H$ is a chunk of $C$ obtained by removing the sequence of nodes $F^-$, the sequence $F^-$ is a prefix of a topological sorting of ${\cal D}_F$, and hence of $C$. Let $\pi(F)$ be {\em any} topological sorting of ${\cal D}_F$ of which $F^-$ is a prefix. By Lemma~\ref{lem:ts}, the DAGs ${\cal D}_F$ and ${\cal D}_{\pi(F)}$ are the same. Since $f_h$ is not a source node in $H$, there exists a node $f_q$ in $H$ such that $f_h$ is adjacent to $f_q$ in $H$. Since $H$ is a subgraph of $C$ (and hence of ${\cal D}_F$) and ${\cal D}_{\pi(F)} = {\cal D}_F $, this implies that $f_h$ is adjacent to $f_q$ in ${\cal D}_{\pi(F)}$ as well. By the definition of adjacency in ${\cal D}_{\pi(F)}$, either $\phi(f_q)=\epsilon(f_h)$, or $\epsilon(f_h)$ and $\phi(f_q)$ share a triangle in the triangulation preceding flip $f_h$, when the flips are performed (starting from ${\cal T}_{initial}$) with respect to the sequence $\pi(F)$.
 Since $f_h$ is adjacent to $f_q$ in $H$, if no boundary edge of $Q_{\epsilon(f_h)}$ is flipped before flip $f_h$ in $\pi(F)-F^-$, then this is only possible if $\epsilon(f_h)$ is flipped by some flip $f_x$ that appears before $f_q$ in $\pi(F)-F^-$, and then restored back by flip $f_q$ (\ie, $\phi(f_q) = \epsilon(f_h))$; this is, however, a contradiction to the minimality of $\pi(F)$ (because $F$ is minimum and $|\pi(F)|=|F|$).

Therefore, one of the boundary edges of $Q_{\epsilon(f_h)}$ must be flipped by some flip before $f_h$ in $\pi(F) - F^-$; let $f_p$ be the first such flip in $\pi(F) - F^-$. Since $\epsilon(f_p)$ and $\epsilon(f_h)$ share a triangle before $f_p$ is flipped in $\pi(F) - F^-$, $\phi(f_p)$ and $\epsilon(f_h)$ share a triangle after $f_p$ is flipped in $\pi(F) - F^-$, and by Lemma~\ref{lem:allpath}, there is a directed path from $f_p$ to $f_h$ in the component $C$ of ${\cal D}_{\pi(F)}={\cal D}_{F}$. Since $H$ was obtained from $C$ by repeatedly removing source nodes (if any), there is a directed path from $f_p$ to $f_h$ in $H$. The edges $\epsilon(f_h)$ and $\epsilon(f_p)$ share a triangle in ${\cal T}$, and hence, in one action (of type (i)) the algorithm can move from $\epsilon(f_p)$ to $\epsilon(f_h)$ in ${\cal T}$.

Since there is a directed path from $f_p$ to $f_h$ in $H$, the height of $f_p$ is smaller than that of $f_h$. By the inductive hypothesis, there is a source node $f_s$ in $H$, such that (1) there is  a walk $W'$ from $\epsilon(f_p)$ to $\epsilon(f_s)$ in ${\cal T}$; (2) there is a directed path $B'$ from $f_s$ to $f_p$ in $H$; and (3) the length of $W'$ is at most that of $B'$.

From the above, it follows that (1) there is a walk $W$ from $\epsilon(f_h)$ to $\epsilon(f_s)$ in ${\cal T}$ by extending $W'$ by a single move from $\epsilon(f_h)$ to $\epsilon(f_p)$; (2) there is a path $B$ from $f_s$ to $f_h$ in $H$ by concatenating $B'$ with the directed path from $f_p$ to $f_h$ in $H$; and (3) the length of $W$ is at most that of $B$. This concludes the inductive proof. \qed
\end{proof}

We now formally give the recursive definition of $J_C$, described for a chunk $H$ of $C$ obtained from $C$ during the recursion. The definition uses a designated node $f_h$ in $H$, referred to as the ``entry point'' of $H$. We will associate with each chunk $H$, obtained during the recursive construction, a number $\mu(H)$, referred to as the {\em level} of $H$, indicating the recursive phase-number during which the chunk was first created. At the top level of the recursion, $H=C$, $f_h$ is a node in $C$ corresponding to any changed edge in the initial triangulation, and $\mu(H)=0$; otherwise, the entry point $f_h$ and $\mu(H)$ will be defined recursively.

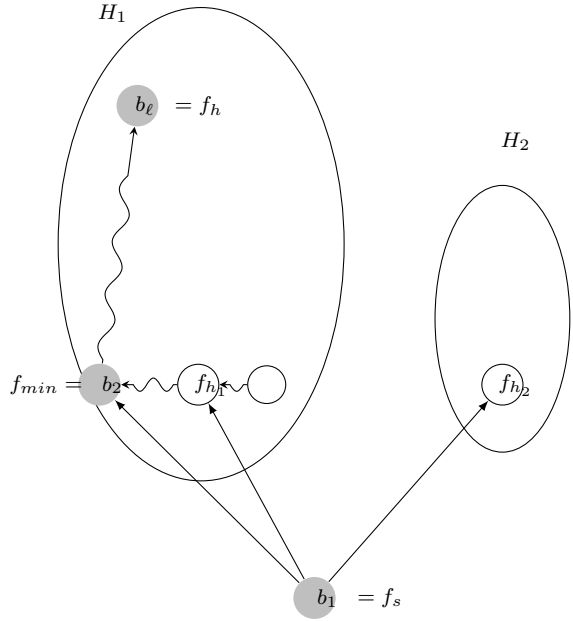
\begin{figure}
\begin{center}
\begin{tikzpicture}[
  every node/.style={on grid},
  setA/.style={fill=lightgray,circle,inner sep=2pt,minimum size=.4cm},
  setC/.style={circle, draw, inner sep=.1pt,minimum size=.5cm},
  every fit/.style={draw,ellipse,text width=70pt},
  >=latex
]

\hspace*{1mm} \node[setA,label=right:{$=f_h$}] (a) {$b_{\ell}$}; \node[left =5mm of a] (w) {};

\hspace*{0.7mm} \node [setA,below=3.7cm of w,label=left:{$f_{min}=$}]
(c) {$b_2$};  \node [above=of w,anchor=south] {$H_1$}; \node [setC, right =1.3cm of c] (z) {$f_{h_1}$}; \node [setC, right =2.2cm of c]
(z1) {}; \node [setC, right =4cm of z] (b) {$f_{h_2}$}; \node [setA,below right =4cm of c,label=right:{$=f_s$}] (e) {$b_1$}; \node[inner sep=0pt,above=2cm of b] (x) {};

\node[above=of x,anchor=south] {$H_2$};

\draw[<-,snake=coil,line before snake=.5mm,line after snake=.5mm,segment aspect=0,
        segment length=20pt,-stealth] (c) -- node[] {} (a); \draw[<-,snake=coil,line before snake=.5mm,line after snake=.5mm,segment aspect=0,
        segment length=10pt,-stealth] (z) -- node[] {} (c);

\draw[<-,snake=coil,line before snake=.5mm,line after snake=.5mm,segment aspect=0,
        segment length=10pt,-stealth] (z1) -- node[] {} (z);

\draw[->] (e) -- node[] {} (c);
\draw[->] (e) -- node[] {} (b);
\draw[->] (e) -- node[] {} (z);

\begin{pgfonlayer}{background}
\node[fit= (a) (z1) ] {};

\tikzset{every fit/.style={draw,ellipse,text width=30pt}} \node[fit=
(b) (x)] {}; \end{pgfonlayer} \end{tikzpicture} \end{center} \caption{Illustration of the definition of entry points, where backbone nodes are colored gray. The entry point of $H_1$ is the node $f_{h_1}$ in $H_1$ adjacent to $f_s$ that has a path to the backbone node $f_{min}$ with the minimum index (node $b_2$ in this case). The entry point of $H_2$, which does not contain backbone nodes in this case, is the node $f_{h_2}$ of minimum index ($h_2$) that is adjacent to $f_s$.} \label{fig:connecting} \end{figure}

\begin{definition}\rm
\label{def:tree}
 Let $H$ be a chunk of $C$, $\mu(H)$ the level of $H$, and $f_h$ the entry point of $H$. (At the top level of the recursion $H=C$, $\mu(H)=0$, and $f_h$ is a node in $C$ corresponding to any changed edge.) The subgraph $J_H$ of $H$ is defined recursively as follows.

\begin{itemize}

\item[1.] Let $B=\langle f_s=b_1, \ldots, b_{\ell}=f_h\rangle$, where $f_s$ is a source node in $H$ (possibly $f_h=f_s$), be the backbone of $H$ defined in Lemma~\ref{lem:walk}. Define the {\em level} of $B$ to be that of chunk $H$ (\ie, $\mu(H)$).

\item[2.]  Remove $f_s$ from $H$, and let $H_1, \ldots, H_x$ be the components of $H^-=H-f_s$. Define $f_s$ to be the {\em connecting point} to chunk $H_p$, and define $\mu(H_p)= \mu(H)+1$, for $p=1, \ldots, x$.

\item[3.] For each chunk $H_p$, $p =1, \ldots, x$, if $H_p$ contains a node from a backbone whose level is smaller than $\mu(H_p)$, then let $f_{min}$ be the backbone node in $H_p$ of minimum index (with respect to $F$); define the {\em entry point} of $H_p$ to be the node $f_{h_p}$ in $H_p$ that is adjacent to $f_s$ and that has a path to $f_{min}$ in $H_p$, and in case more than one neighbor of $f_s$ satisfies this property pick the one with the minimum index (Lemma~\ref{lem:properties} proves the existence of $f_{h_p}$). Otherwise ($H_p$ does not contain a backbone node), define the entry point of $H_p$ to be the node $f_{h_p}$ in $H_p$ with the minimum index $h_p$ (with respect to $F$) that is adjacent to $f_s$. (See Fig.~\ref{fig:connecting} for illustration.)

\item[4.] Define the subgraph $J_p$ of $H_p$ with entry point $f_{h_p}$ recursively, for $p =1, \ldots, x$.

\item[5.] Define $J_H$ to be the union of the edges in $B$, the edges in $J_p$ and the edges between $f_s$ and each entry point of $H_p$, for $p =1, \ldots, x$.
\end{itemize}
\end{definition}

Let $J_C$ be the subgraph of the underlying graph of $C$ resulting from applying the above recursive definition to $C$ starting at any flip in $C$ corresponding to any changed edge. We have the following lemma:

\begin{lemma}
\label{lem:properties}
Let $i \in \nat$. All the backbones --- defined during the recursive definition of $J_C$ --- whose level is at most $i$ and that exist in the same chunk are edge-disjoint and belong to a single (simple) path in the chunk; on this path the (remaining) nodes from each backbone appear consecutively. Moreover, the entry node of a chunk (defined in step 3 of Definition~\ref{def:tree}) exists.
\end{lemma}

\begin{proof}
The proof is by induction on $i$. The statement is clearly true for $i=0$ --- at the top level of the recursion, where the only chunk is $C$ whose entry point is defined to be a flip corresponding to a changed edge, and where there is only one defined backbone. Let $i > 0$, and assume that the statement of the lemma is true for any positive integer smaller than $i$. Suppose that $H_p$ is a chunk of level $i$ that was obtained from a chunk $H$ of level $i-1$ in one recursive step, that is, $\mu(H_p) = i=\mu(H) + 1$.

$H_p$ was obtained by removing a source node $f_s$ from $H$, which is a node of the backbone of $H$. By the inductive hypothesis, all the backbones in $H$ whose level is at most $i-1$ are edge-disjoint and belong to a path $P$ in $H$. Since $f_s \in P$ is a source node in $H$, $f_s$ must be the tail of $P$. If $H_p$ does not contain any node from a backbone whose level is smaller than $\mu(H_p)$, then the entry point $f_{h_p}$ of $H_p$ is defined to be the node in $H_p$ with the minimum index that is adjacent to $f_s$, and in this case $f_{h_p}$ exists. Moreover, there is only one backbone in $H_p$. Therefore, the statement of the lemma is true for $H_p$ in this case. Suppose now that $H_p$ contains at least one node from a backbone whose level is smaller than $\mu(H_p)$. Because the underlying graph of $H_p$ is connected and $P^-=P-f_s$ is a path, it follows that $H_p$ contains $P^-$. Let $b$ be the node adjacent to $f_s$ on $P^-$, \ie, the tail of $P^-$. The path $P^-$ contains all the backbone nodes in $H_p$ of levels smaller than $\mu(H_p)$, and in particular, $P^-$ contains the node $f_{min}$ in $H_p$ of minimum index (the minimum index is with respect to $F$) that belongs to a backbone of level smaller than $\mu(H_p)$. Since $b$ is adjacent to $f_s$, it follows from the preceding that node $f_{h_p}$ exists because $b$ satisfies that it is adjacent to $f_s$ and there is a path from $b$ to $f_{min}$ (which is possibly $b$ itself). Now let $B_{H_p}$ be the backbone of $H_p$. Since $B_{H_p}$ is a path whose head is $f_{h_p}$, and since --- by the choice of $f_{h_p}$ --- there is a path from $f_{h_p}$ to $f_{min}$, the indices of the nodes in $B_{H_p}$ are not larger than the index of $f_{min}$, which is the backbone node on $P^-$ of the minimum index. Therefore, the set of edges in $B_{H_p}$ is disjoint from the set of edges on $P^-$ that belong to backbones of levels smaller than $\mu(H_p)$. Since $B_{H_p}$ is a path in $H_p$, and since there is a path from $f_{h_p}$ to $f_{min}$ in $H_p$, all the backbones in $H_p$ of levels at most $i=\mu(H_p)$ belong to a path in which all the (remaining) nodes of each backbone appear consecutively. This completes the inductive proof. \qed
\end{proof}

\begin{corollary}\label{cor:properties}
All the backbones defined in the recursive definition of $J_C$ are edge-disjoint.
\end{corollary}

\begin{proof}
Let $B$ be a backbone of level $i > 0$ of a chunk $H$, defined during the recursive definition of $J_C$. It suffices to show that the edges of $B$ are different from the edges of all backbones of levels at most $i$. Clearly, the edges of $B$ are different from those of backbones of levels at most $i$ in chunks other than $H$, and from the edges of backbones of levels smaller than $i$ that have been removed prior to the definition of $B$, during the recursive definition of $J_C$. By Lemma~\ref{lem:properties}, the edges of $B$ are different from those of backbones other than $B$ of levels at most $i$ that (may) exist in $H$. The statement follows. \qed
\end{proof}

\begin{lemma}\label{lem:spanning}
Let $C$ be a component of ${\cal D_F}$. The subgraph $J_C$ formed by applying Definition~\ref{def:tree} to $C$ is a spanning subgraph of the underlying graph of $C$.
\end{lemma}

\begin{proof}
The statement follows from the connectedness of $C$ and Definition~\ref{def:tree} by a simple inductive argument: $J_C$ contains a source node $f_s$ of $C$ and an edge from $f_s$ to each chunk in $C - f_s$. Assuming inductively that applying (the recursive definition) Definition~\ref{def:tree} to a chunk of $C-f_s$ results in a spanning subgraph of that chunk, the subgraph obtained by applying Definition~\ref{def:tree} to $C$ results in the spanning subgraph $J_C$ of $C$. \qed
\end{proof}

\begin{lemma} \label{lem:sharetriangle}
Suppose that $H \neq C$ is a chunk obtained from $C$ by removing a sequence of nodes $F^-$ during the construction of $J_C$ in Definition~\ref{def:tree}. Let ${\cal T}$ be the triangulation resulting after performing the sequence of flips $F^-$. Let $f_i$, $f_h$ be the connecting and entry points of $H$, respectively. Then either $\phi(f_i) = \epsilon(f_h)$, or $\phi(f_i)$ and $\epsilon(f_h)$ share a triangle in ${\cal T}$.
\end{lemma}

\begin{proof}
 Suppose that $\phi(f_i) \neq \epsilon(f_h)$ and we show that $\phi(f_i)$ and $\epsilon(f_h)$ share a triangle in ${\cal T}$, which proves the lemma.

    Since $H$ is a chunk of $C$ obtained by removing the sequence of nodes $F^-$, the sequence $F^-$ is a prefix of a topological sorting of ${\cal D}_F$, and hence of $C$. Let $\pi(F)$ be any topological sorting of ${\cal D}_F$ of which $F^-$ is a prefix. By Lemma~\ref{lem:ts}, the DAGs ${\cal D}_F$ and ${\cal D}_{\pi(F)}$ are the same. Since there is an arc from $f_i$ to $f_h$ in $C$ (by the way $f_h$ was chosen), there is an arc from $f_i$ to $f_h$ in ${\cal D}_{\pi(F)}$, and $\phi(f_i)$ and $\epsilon(f_h)$ share a triangle $T$ in the triangulation preceding flip $f_h$, when the flips are performed (starting from ${\cal T}_{initial}$) with respect to the sequence $\pi(F)$. We show next that $T$ is in ${\cal T}$.

  Suppose that this is not the case. Since $f_i$ is the entry point of $H$, $H$ was formed (according to Definition~\ref{def:tree}) upon the removal of node $f_i$, and hence, $f_i$ is a flip in $F^-$. Since $T$ is not in  ${\cal T}$, the aforementioned statement implies that there is a flip $f_p$, between $f_i$ and $f_h$ in $\pi(F)$, that created $T$ for the last time. Hence, both $\phi(f_i)$ and $\phi(f_p)$ belong to $T$ in the triangulation ${\cal T'}$ after flip $f_p$ is performed, following the sequence $\pi(F)$. See Fig.~\ref{fig:adjacent} for illustration. Observe that $\phi(f_i) \neq \phi(f_p)$ because $f_i$ has an arc to $f_h$ (otherwise, $f_i$ will not have an arc to $f_h$, but $f_p$ will have instead). Since $\phi(f_i)$ and $\phi(f_p)$ share triangle $T$ in ${\cal T'}$, $\phi(f_i)$ and $\epsilon(f_p)$ also share a triangle in the triangulation before flip $f_p$ is performed, following the sequence $\pi(F)$. This implies that $f_i$ has an arc to $f_p$ in ${\cal D}_{\pi(F)}$, and since ${\cal D}_{\pi(F)}={\cal D}_F$, there is an arc from $f_i$ to $f_p$ in $C$.
 Since both $\phi(f_p)$ and $\epsilon(f_h)$ (possibly identical) belong to triangle $T$ in ${\cal T'}$, and by the choice of $f_p$, there is an arc from $f_p$ to $f_h$ in the component $C$ of ${\cal D}_{\pi(F)}={\cal D}_F$. Since $H$ is a chunk, and hence is connected, and since there is an arc from $f_i$ to $f_p$ and an arc from $f_p$ to $f_h$, upon the removal of $f_i$ to form $H$, $f_p$ was a node in $H$. This, however, contradicts the choice of the entry point $f_h$ of $H$ (in either of the two cases used to define the entry point) because $f_i$ is adjacent to $f_p$ and there is an arc from $f_p$ to $f_h$ (which implies that $f_p$ has a smaller index than $f_h$ in $F$). \qed

\end{proof}

We describe next the sequence of actions, ${\cal S}$, claimed in Theorem~\ref{thm:traverse}, that \algo~performs starting at a changed edge corresponding to a node in $C$. The sequence ${\cal S}$, which we will recursively describe, corresponds to a recursive traversal of $J_C$.  Let $f_i$, $f_h$ be the connecting and the entry points to a chunk $H \neq C$, respectively. At the top level of the recursive definition, where $H=C$ is a component of ${\cal D}_F$, define $f_h$ to be a flip in $C$ whose underlying edge $\epsilon(f_h)$ is a changed edge ($f_i$ need not be defined). We initialize ${\cal S}$ to be empty. The recursive construction of ${\cal S}$, denoted \rec, is described schematically in {\bf Algorithm 2} below. \\

\setcounter{figure}{1}
\renewcommand{\figurename}{Algorithm}
 \begin{algorithmnew}{The sequence of actions for a chunk $H$.}{alg:nondeterministic}
\algtitle{\rec}

Let $H$ be a chunk with entry point $f_h$. The sequence of actions on $H$ is defined as follows.
\begin{codeblock}
\step Append to ${\cal S}$ a sequence of actions of type (i) that performs the walk $W$ from $\epsilon(f_h)$ to $\epsilon(f_s)$ (in the current triangulation ${\cal T}$) defined in Lemma~\ref{lem:walk}, associated with the backbone $B=\langle f_s=b_1, \ldots, b_{\ell}=f_h\rangle$ of $H$.

\step Let $H^-=H-f_s$. If $H^- = \emptyset$ then append to ${\cal S}$ a type-(iv) action that flips $\epsilon(f_s)$ and moves to the edge on the top of the stack $\Lambda$. Else do the following:

\begin{itemize}
\step[2.1.] Let $H_1, \ldots, H_x$ be the components of $H^-$.

\step[2.2.] If $x > 1$ then append to ${\cal S}$ a type-(iii) action that flips $\epsilon(f_s)$, pushes $\phi(f_s)$ into the stack $\Lambda$, and moves to the underlying edge of the entry point of $H_1$; else ($x=1$) append to ${\cal S}$ a type-(ii) action that flips $\epsilon(f_s)$ and moves to the underlying edge of the entry point of $H_1$.


\step[2.3.] For $p=1, \ldots, x$:

\begin{itemize}
\step[2.3.1.] Append to ${\cal S}$ the sequence of actions constructed by recursing on $H_p$. For the last action of $H_p$, which must be a flip-action (flipping the underlying edge of the last flip traversed in $H_p$), do the following:

\begin{itemize}
\step[2.3.1.1.] if $x > 1$ and $p < x$ (there is more than one component in $H^-$ and $H_p$ is not the last component of $H^-$ considered in the for loop in step 2.3), then the last action of $H_p$ appended to ${\cal S}$ is a type-(iv) action that moves to the edge on the top of the stack $\Lambda$ after performing the last flip in $H_p$;
\step[2.3.1.2.] else if $x > 1$ and $p=x$ (there is more than one component in $H^-$ and $H_p$ is the last component  considered in the for loop in step 2.3) then the last action of $H_p$ appended to ${\cal S}$ is a type-(v) action that moves to the edge on the top of the stack $\Lambda$ and pops $\Lambda$ after performing the last flip in $H_p$.
\end{itemize}
\end{itemize}
\end{itemize}
\end{codeblock}
\end{algorithmnew}

\renewcommand{\figurename}{Fig.}
\setcounter{figure}{3}
\begin{figure}
\begin{center}
 \resizebox{1\textwidth}{!}{
\begin{tikzpicture}
[vertex/.style={fill,circle,inner sep = 2pt}]

\node (a) at (0, 0) [vertex]{};
\node (b) at (6,6) [vertex] {};
\node (c) at (12,0) [vertex]{};
\node (d) at (8, -3.4) [vertex] {};
\node (e) at (-4, 3) [] {};

\draw[line width=2] (a) -- (b) -- (c) -- (a);
\draw[line width=0.3, dashed] (b) -- (d);

\node at (2, 3) {\Large $\phi(f_i)$};
\node at (4, -0.5) {\Large $\phi(f_p)$};

\filldraw[fill=lightgray, draw=lightgray] (3.5, 1.6) circle[radius= 0.5cm] node {\Large $f_i$};
\filldraw[fill=lightgray, draw=lightgray] (9, 3) circle[radius= 0.5cm] node {\Large $f_h$};
\filldraw[fill=lightgray, draw=lightgray] (7.7, -1.8) circle[radius= 0.5cm] node {\large $f_p$};

\draw[->,lightgray, line width=2, dashed] (3.8, 1.2) -- (7.3, -1.3);
\draw[->,lightgray, line width=2, dashed] (7.7, -1.3)  -- (8.8, 2.4);

\end{tikzpicture}
}
\end{center}
\caption{Illustration for the proof of Lemma~\ref{lem:sharetriangle}. The black vertices and black solid edges are vertices and edges of ${\cal T}$, respectively.  The gray circles labelled by flips represent the nodes/flips in $C$, and the dashed (thick) gray edges are arcs in $C$. The dashed black edge, through the node labeled $f_p$, is the edge (in a previous triangulation) $\epsilon(f_p)$ that was flipped into $\phi(f_p)$; the solid black edge through $f_h$ is $\epsilon(f_h)$.}
\label{fig:adjacent}
\end{figure}
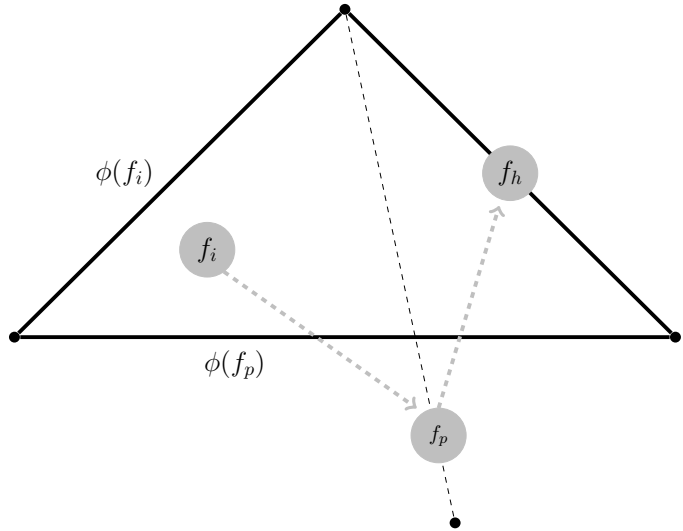

We are now ready to prove Theorem~\ref{thm:traverse}.

\begin{proof} ({\bf Theorem~\ref{thm:traverse}})
Let ${\cal S}$ be the sequence of actions constructed according to \rec~when invoked on component $C$, which corresponds to a traversal of $J_C$. It is clear that the order of the flips in ${\cal S}$ corresponds to a topological sorting of $C$ because every flip corresponds to the removal of a source node from a chunk resulting from $C$ in the recursive definition of $J_C$, and because $J_C$ is a spanning subgraph of $C$ by Lemma~\ref{lem:spanning}. Therefore, it suffices to show the validity of the construction of ${\cal S}$ (\ie, that the actions taken in the construction are valid actions), and that the length of ${\cal S}$ is at most $11|V(C)|$. To do the latter, we charge the actions of ${\cal S}$ to the nodes and edges of $J_C$.

\rec~is invoked on a chunk $H$ with an entry node $f_h$ of $H$ (initially $H=C$ and $f_h$ is a node in $C$ whose underlying edge is a changed edge in the current triangulation ${\cal T}$). In Lemma~\ref{lem:walk} we showed that there is a path $B=\langle f_s=b_1, \ldots, b_{\ell}=f_h\rangle$ from a source node $f_s$ in $H$ to $f_h$ that corresponds to a walk $W$ from edge $\epsilon(f_h)$ to $\epsilon(f_s)$ in the current triangulation ${\cal T}$; moreover, the length of this walk is at most the length of $B$. \rec~appends to ${\cal S}$ a sequence of actions for performing $W$ of type (i) (as defined in Subsection~\ref{subsec:actions}), and the number of such actions is at most the length of $B$. Then \rec~appends to ${\cal S}$ an action that flips $\epsilon(f_s)$, which is one action either of type (ii), (iii) or (iv). Next, \rec~recurses on each chunk $H_p$ of $H-f_s$, starting at the entry point $f_{h_p}$ of $H_p$. In Lemma~\ref{lem:sharetriangle}, we showed that the edges $\phi(f_s)$ and $\epsilon(f_{h_p})$ are either identical, or they share a triangle in the current triangulation when \rec~is recursively called on $H_p$. Hence, in at most one action of type-(i), we can move from $\phi(f_s)$ to $\epsilon(f_{h_p})$. If there is more than one chunk in $H-f_s$, \rec~appends an action of type-(iii) to ${\cal S}$ that pushes $\phi(f_s)$ into the stack after flipping $f_s$; moreover, the last action that it appends to ${\cal S}$ after recursing on the components of $H^-$ is an action of type-(v) that pops the stack and moves to the top of it.

To prove that the length of ${\cal S}$ is at most $11|V(C)|$, we charge the actions in ${\cal S}$ to the nodes and edges of $J_C$. The actions in ${\cal S}$ can be classified into two categories: flip-actions and move-actions. The number of flip-actions is at most the number of nodes in $J_C$, which is $|V(C)|$. Note that actions that involve moving to the top of the stack, or popping the stack, or both, are combined with flips, and hence have been accounted for. The move-actions are all of type (i), and can be further divided into two groups: (I) those corresponding to a walk $W$ associated with a backbone $B$ of a chunk $H$, and (II) those corresponding to moves from an edge $\phi(f_s)$ (on the top of the stack), corresponding to a source $f_s$ in a chunk $H$, to an edge whose corresponding node is an entry point of a chunk in $H-f_s$. The number of actions in group (I) is at most $|E(C)|$; this is because, by Corollary~\ref{cor:properties}, the edges of different backbones are distinct, and hence the total number of such edges (and hence actions in group (I)) is at most $|E(J_C)| \leq |E(C)|$. To bound the number of actions in group (II), observe that each such action corresponds to an edge in $C$ from $f_s$ to the entry point of a chunk resulting from removing $f_s$ from $H$. Since $f_s$ is removed from $J_C$ upon making the recursive calls to the resulting chunks, we can charge each such action in a one-to-one fashion to edges of $E(J_C)$. Therefore, the number of actions in group (II) is at most  $|E(J_C)| \leq |E(C)|$, and the total number of actions of type (i) is at most $2|E(J_C)| \leq 2|E(C)|$. It follows that the length of ${\cal S}$ is at most $|V(C)| + 2|E(J_C)| \leq 11 |V(C)|$ (by Lemma~\ref{lem:dagsize}). \qed
\end{proof}


We conclude with the following main theorem:

\begin{theorem}\label{thm:final}
Let $({\cal T}_{initial}, {\cal T}_{final}, k)$ be an instance of {\sc Flip Distance}. The algorithm \algo~decides $({\cal T}_{initial}, {\cal T}_{final}, k)$ correctly in time $\Oh(n+k\cdot c^{k})$, where $n$ is the number of points in the triangulations.
\end{theorem}

\begin{proof}
It is easy to see that the correctness of \algo \\ follows from: (1) there is an enumeration in step 2 of the algorithm of the correct number of components $t$, and of $(k_1, \ldots, k_t)$ such that $k_i$ is the exact number of flips in $C_i$, $i=1, \ldots, t$; and (2) the existence of a sequence ${\cal S}$ (by Theorem~\ref{thm:traverse}) of actions of length at most $11k_i$ that, starting from any changed edge in $C_i$, performs the $k_i$ flips in $C_i$ in a topologically-sorted order. We only need to analyze the running time of the algorithm.

The initial processing of the triangulations to find the changed edges takes $\Oh(n)$ time because the number of edges in any triangulation of ${\cal P}$ is $\Oh(n)$. The total number of sequences $(k_1,\ldots, k_{t})$, for $t=1, \ldots, k$, satisfying $k_1+\cdots +k_t \leq k$ and $k_1, \ldots, k_t \geq 1$, is at most $2^k$. This is because the total number of sequences $(k_1,\ldots, k_{t})$, for $t=1, \ldots, k$, satisfying $k_1+\cdots +k_t = k$ is the {\em composition number} of (integer) $k$, which is equal to $2^{k-1}$; the above upper bound of $2^k$ now follows by summing over all positive integers less $\leq k$. For each such sequence $(k_1, \ldots, k_t)$, the algorithm iterates through the numbers $k_1, \ldots, k_t$ in the sequence. For a number $k_i$, $1 \leq i \leq t$, \algo~branches into all possible sequences ${\cal S}$ of length at most $11k_i$, and in which each action is one of 14 choices (by Proposition~\ref{prop:choices}). Therefore, the number of such sequences is at most $14^{11k_i}$. It follows that the total size of the search trees depicted by the algorithm to find a solution (if it exists) is at most:
$
\sum_{t=1}^k\sum_{\begin{subarray}{c}\text{$(k_1, \ldots, k_t)$}
     \end{subarray}} (14^{11k_1} \times \cdots \times 14^{11k_t}) = \Oh(2^{k} 14^{11k})=\Oh(c^k),\nonumber\label{ineq:bound}
$ where $c \leq 2 \cdot 14^{11}$. Since the actions along any root-leaf path in the search tree can be carried out in time $\Oh(k)$, and the resulting triangulation at the end of the path can be compared to ${\cal T}_{final}$ in $\Oh(k)$ time as well (because the number of flips performed is at most $k$), the running time along each root-leaf path in the tree is $\Oh(k)$. (Note that, at a given edge in the triangulation, we can test whether flipping the edge is admissible in $\Oh(1)$ time by testing whether or not the quadrilateral determined by the two triangles sharing the edge is convex.)  It follows from the above that the running of \algo~is $\Oh(n+k\cdot c^k)$.  \qed
\end{proof}

\begin{remark}\label{rem:equalflipdistance}
We note that \algo, within the same time upper bound, can be used to decide whether $k$ is the flip distance between ${\cal T}_{initial}$ and ${\cal T}_{final}$, which means that no sequence of flips of length smaller than $k$ exists that transforms ${\cal T}_{initial}$ to ${\cal T}_{final}$. This is because the algorithm is trying all sequences $(k_1,\ldots, k_{t})$, for $t=1, \ldots, k$, satisfying $k_1+\cdots +k_t \leq k$ and $k_1, \ldots, k_t \geq 1$. If $({\cal T}_{initial}, {\cal T}_{final}, k')$, where $k'=k_1+\cdots +k_t$,  is a yes-instance only for $k'=k$, then the flip distance between ${\cal T}_{initial}$ and ${\cal T}_{final}$ is $k$.
\end{remark}

\section{Extensions to other triangulations}
\label{sec:extensions}

\subsection{Triangulated polygonal regions with holes}
\label{subsec:polygonalregions}
The triangulations considered in this paper are triangulations of a point set. It is not difficult, however, to see that the presented algorithm works as well for triangulations of any polygonal region, even with
holes in its interior (note that a single point is considered as a degenerate hole). This can be easily seen in one of two ways. The first is to observe that the presented algorithm flips an edge only if the flip is admissible, where flipping an edge is admissible if the edge is a diagonal in a convex quadrilateral formed by two adjacent triangles in the triangulation. By modifying the definition of the admissibility of a flip to exclude the fixed edges that form the boundary of the polygonal region (\ie, flipping these edges is inadmissible), which subsume the boundary edges of holes, the presented algorithm will seamlessly work within the same time upper bound on triangulations of a polygonal region with holes. The second way the presented algorithm can be used to solve the {\sc Flip Distance} problem on triangulations of a polygonal region with holes, is to use the reduction in~\cite{lubiw} from the {\sc Flip Distance} of triangulations of a polygonal region with holes to the {\sc Flip Distance} of triangulations of a point set; this reduction is a polynomial-time reduction that preserves the flip distance value. Using this reduction, however, will result in a polynomial-time increase in the overall running time. The idea behind the reduction in~\cite{lubiw} is to take two triangulations ${\cal T}_{initial}$ and ${\cal T}_{final}$ of a polygonal region (with holes) and reduce them in polynomial time to two triangulations ${\cal T'}_{initial}$ and ${\cal T'}_{final}$ of a point set, such that the flip distance between ${\cal T}_{initial}$ and ${\cal T}_{final}$ is the same as that between ${\cal T'}_{initial}$ and ${\cal T'}_{final}$. To do so, each boundary edge of a hole is duplicated $n^2$ times, where $n$ is the number of vertices of the polygonal region, using a gadget. This duplication is done carefully so that ``dismantling'' a gadget does not offer any savings in the flip distance, and hence, the flip distance is preserved. This implies that the $\Ostar(c^{k})$-time algorithm presented in this paper can be composed with the reduction in~\cite{lubiw} to give an $\Ostar(c^{k})$-time algorithm for the {\sc Flip Distance} problem on triangulations of polygonal regions with holes, albeit with an additive quadratic (in $n$) increase in the overall running time of the algorithm in this paper, if this second approach is adopted.

\subsection{Triangulated graphs}
\label{subsec:planar}
In this section we show how to extend our results to labeled triangulated graphs. Without loss of generality, we assume throughout this section that the number of vertices in the graphs under consideration is at least 4.
We consider the problem of computing the flip distance between two labeled triangulated graphs, that is, two maximal plane triangulations defined on the same vertex set, or via a bijection between their vertex sets.
In this setting, referred to as the {\em combinatorial setting} as opposed to the {\em geometric setting} treated in the previous section, an edge in an embedding of a triangulated graph $G$ no longer needs to be drawn as a line segment, and can be drawn as an arc (curve) joining the endpoints of the edge. A flip of an edge in this setting is the operation of replacing the edge with another edge to obtain another triangulated graph, provided that the obtained graph remains simple (no multiple edges are created), otherwise, flipping the edge is not admissible; such a flipping operation is called a {\em combinatorial flip}, as opposed to the {\em geometric flip} considered in the previous section. The difference between the geometric and combinatorial settings is that there is no convexity requirement, and that edges can be drawn as arcs as opposed to line segments. In particular, while every admissible geometric flip is also an admissible combinatorial flip, the converse may not be true (see Fig.~\ref{fig:combvsgeo1} and Fig.~\ref{fig:combvsgeo2}). The flip distance problem in the combinatorial setting is defined as follows: \\

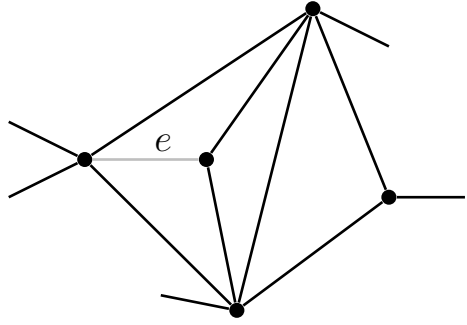
\begin{figure}
\centering
\begin{tikzpicture}
[vertex/.style={fill,circle,inner sep = 2pt}]

\node (x) at (7, 1) [vertex]{};
\node (a) at (4,-1) [vertex] {};
\node (b) at (5.6,-1) [vertex]{};
\node (c) at (8,-1.5) [vertex] {};
\node (d) at (6,-3) [vertex] {};

\draw[line width=1] (a) -- (x) -- (c) -- (d) -- (a);

\draw[line width=1] (3, -1.5) -- (a) -- (3, -0.5); \draw[line width=1] (c) -- (9, -1.5); \draw[line width=1] (x) -- (8, 0.5); \draw[line width=1] (d) -- (5, -2.8);

\draw[black, line width=1] (x) -- (d);
\draw[black, line width=1] (x) -- (b) ;
\draw[black, line width=1] (b) -- (d) ;
\draw[lightgray, line width=1] (b) -- (a) node[midway, anchor=south west] {\black \Large $e$}; \end{tikzpicture}
\caption{Neither a geometric nor a combinatorial flip to $e$ is admissible.}
\label{fig:combvsgeo1}
\end{figure}

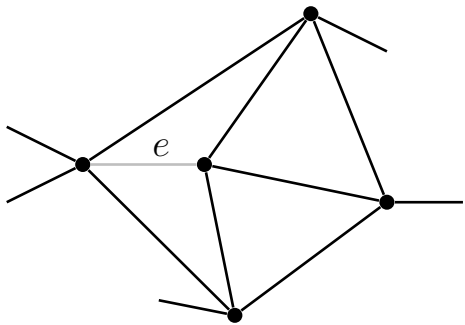
\begin{figure}
\centering

\begin{tikzpicture}
[vertex/.style={fill,circle,inner sep = 2pt}]

\node (x) at (7, 1) [vertex]{};
\node (a) at (4,-1) [vertex] {};
\node (b) at (5.6,-1) [vertex]{};
\node (c) at (8,-1.5) [vertex] {};
\node (d) at (6,-3) [vertex] {};

\draw[line width=1] (a) -- (x) -- (c) -- (d) -- (a);

\draw[line width=1] (3, -1.5) -- (a) -- (3, -0.5); \draw[line width=1] (c) -- (9, -1.5); \draw[line width=1] (x) -- (8, 0.5); \draw[line width=1] (d) -- (5, -2.8);

\draw[black, line width=1] (b) -- (c);
\draw[black, line width=1] (x) -- (b) ;
\draw[black, line width=1] (b) -- (d) ;
\draw[lightgray, line width=1] (b) -- (a) node[midway, anchor=south west] {\black \Large $e$}; \end{tikzpicture}
\caption{A combinatorial flip to $e$ is admissible, but a geometric flip is not.}
\label{fig:combvsgeo2}
\end{figure}

\paramproblem{{\sc Combinatorial Flip Distance}}{Two labeled triangulated graphs $G$ and $H$.}{$k$.}{Is the minimum number of flips needed to transform $G$ into a graph isomorphic to $H$ at most $k$?}

It is known that the flip distance between any two labeled triangulated graphs is $\Oh(n\lg{n})$~\cite{sleator}.

To extend the algorithm for \FD, presented in the previous section, to \CFD, we need to show that the results that rely on geometric arguments, which the algorithm in turn relies on, hold in the combinatorial setting as well. While the general skeleton of the algorithm remains the same, the underlying results showing the correctness of the algorithm need to hold true in the combinatorial setting. This is not obvious because in the geometric setting there is a unique realization (\ie, drawing) of an edge between two points $x$ and $y$, namely the line segment joining them, whereas in the combinatorial setting, there are infinitely many ways of realizing an edge between $x$ and $y$.  Even a local deformation of a plane triangulation changes the embedding, but not the graph or the rotation system. The structural results essential for proving the correctness of the algorithm (the results in Section~\ref{sec:structural}  and Lemmas~\ref{lem:walk} and~\ref{lem:sharetriangle}) rely on arguments that assume that an edge is realized as a line segment. Therefore, we need to modify these results so that they work in the combinatorial setting, in which an edge may not be realizable as a line segment. For instance, Lemma~\ref{lem:crossing} relies on edge-crossing arguments that hold in the geometric setting but not necessarily in the combinatorial one. To circumvent these issues, we use the notions of isotopy to define equivalence between different realizations of an edge, and the notion of crossing number to define when two edges cross. We show how using these notions the structural results needed to prove the correctness of the algorithm in the geometric setting carry over to the combinatorial setting.

We start by giving the terminologies and definitions needed to extend the results to the combinatorial setting. Let $\VVV$ be a set of points in the plane.

\begin{definition} \rm
An {\em edge} between two points $a, b \in \VVV$ is an arc embedded in the plane that meets the vertex set $\VVV$ precisely in its endpoints $a$ and $b$.
\end{definition}

\begin{definition}\rm
An {\em edge isotopy} between two edges $e$ and $e'$ is a continuous deformation through edges from $e$ to $e'$ (\ie, no intermediate edge in this continuous deformation meets a vertex in its interior).  It follows from continuity that $e, e'$, and every intermediate edge, meet $\VVV$ precisely in the same pair of endpoints as $e, e'$. Two edges $e, e'$ are {\em isotopic}, denoted $e \sim e'$,  if there is an edge isotopy between them.
\end{definition}

Let $G$ be a labeled triangulated graph. Since $G$ is 3-connected, it has a unique plane embedding up to homeomorphism~\cite{diestel}. Recall that each face in the embedding of $G$, including the outer face, is a triangle.  Consider any embedding $\eta(G)$ of $G$, and let $\VVV$ be the set of (labeled) points/vertices of $\eta(G)$. Let $e=xy$ be an edge of $\eta(G)$. Similarly to the geometric setting, we define the {\em quadrilateral associated with} $e$, $Q_e$, to be the quadrilateral formed by the two faces of $\eta(G)$ that share $e$.
Removing $e$ from $Q_e$ creates a 4-face $\gamma=(x, a, y, b)$.   A {\em flip} $f$ with underlying edge $e$ is the operation of removing $e$ and replacing it with an edge between $a$ and $b$ to obtain another plane triangulation, provided that there is no edge between $a$ and $b$ in $G$; otherwise, flipping $e$ is not admissible. Observe that if $f$ is admissible then the edge between $a$ and $b$ replacing $e$ must be interior to $\gamma$, and that if $f$ is not admissible and an edge $ab$ already exists then this edge must lie outside $\gamma$. Observe also that any two edges between $a$ and $b$ that are interior to $\gamma$ are isotopic. Therefore, any two embeddings that can be obtained from $f$ by two realizations of an edge $ab$ are homeomorphic. As before, $\epsilon(f)$ denotes the underlying edge $e$ of flip $f$ and $\phi(f)$ denotes the new edge resulting from performing $f$.

Suppose that we flip an edge $e$ in $G$ to obtain a triangulated graph $G'$, and let $\eta(G)$ and $\eta(G')$ be any plane embeddings of $G$ and $G'$, respectively. Since both $G$ and $G'$ are 3-connected, and hence, each has a unique plane embedding, it follows that if we perform a flip $f$ --- as described above --- to the realization of $e$ in $\eta(G)$ we obtain an embedding that is homeomorphic to $\eta(G')$. On the other hand, because $G'$ has a unique embedding, a plane embedding of a graph is homeomorphic to $\eta(G')$ if and only if this embedding is isomorphic to $G'$. Therefore, to solve \CFD, we can equivalently compute a plane embedding $\eta(G)$ of $G$ and ask whether there exists a sequence of $k$ flips that transforms $\eta(G)$ to
an embedding that is isomorphic to $H$. We redefine \CFD~so it becomes amenable to our approach: \\

 \paramproblem{{\sc Combinatorial Flip Distance}}{A labeled plane triangulation $G$ and a labeled triangulated graph $H$.}{$k$.}{Is the minimum number of flips needed to transform $G$ to a graph isomorphic to $H$ at most $k$?}

For ease of the notation, and since in the remainder of this section we will be mainly dealing with an embedding of $G$, we abused the notation in the above problem definition and used $G$ to refer to an embedding of the triangulated graph $G$. (Again, because $G$ is 3-connected, there is a correspondence between the graph $G$ and its unique embedding or plane triangulation.)  We will assume that the points of $G$ are labeled with the same labels as the vertices of $H$. The locations of these points will be unchanged by any sequence of flips.

\begin{definition}\rm
Let $e, e'$ be two edges that may or may not be disjoint in their interiors.  The {\em intersection number} between $e$ and $e'$, $i(e,e')$, is the minimal number of intersections taken over pairs of isotopic edges, \ie,  $i(e,e') = \min \{ | e_0\cap e_0'| : e_0 \sim  e, e_0' \sim e'\}$, where $| e_0\cap e_0'|$ is the number of intersection points between the interior of $e_0$ and the interior of $e_0'$.
\end{definition}

If $i(e,e')>0$, then the edges $e$ and $e'$ cross each other inextricably, that is, no edge isotopy can separate them.

 Let $G$ be a plane triangulation, and let $F=\langle f_1,\ldots, f_r \rangle$ be a realization of a valid sequence of flips (\ie, corresponding to a realization of the edges resulting from the flips) with respect to $G$. Observe that any two realizations of the same sequence of flips yield two homeomorphic embeddings. We denote by $G_i$, $i=1, \ldots, r$, the plane triangulation resulting from applying the sequence
 $\langle f_1,\ldots, f_i \rangle$ of flips to $G$; we define $G_0=G$. The notion of flip adjacency is defined in exactly the same way as in Definition~\ref{def:adjacent}. We note that replacing equality of edges with isotopy in Definition~\ref{def:adjacent} yields the same outcome because $f_i \rightarrow f_j$ implies that $f_i$ is the last flip that creates $\epsilon(f_j)$ or an edge of $Q_{\epsilon(f_j)}$.

 Let $(G, H, k)$ be an instance of \CFD. By a solution to the instance $(G, H, k)$ we mean a realization of a sequence of at most $k$ flips that transforms $G$ into an embedding isomorphic to $H$. Assume that $F=\langle f_1,\ldots, f_r \rangle$, where $r \leq k$, is a solution to $(G, H, k)$. The definition of the DAG ${\cal D}_F$ is the same as that in Section~\ref{sec:structural},  and
 Lemma~\ref{lem:dagsize} follows using the same proof.  Lemma~\ref{lem:ts} and Corollary~\ref{cor:ts} also follow using the same arguments.

Since the goal is to transform $G$ into an embedding that is {\em isomorphic} to $H$, we can define changed edges the same way we did in the geometric setting: An edge $e$ is a \emph{changed edge} if $e \in G$ but there is no edge between the endpoints of $e$ in $H$. The definition of essential and nonessential components is the same as well, and Lemma~\ref{lem:essential} carries over in the same way.

 The statements of Lemmas~\ref{lem:crossing}~--~\ref{lem:path} need to be changed to use the terminology of isotopy and intersection number, and the proofs of some of them need to be modified. We explain below how they need to be changed.

Both the statement and the proof of Lemma~\ref{lem:crossing} need to be modified. Since this modification is nontrivial, we present the proof for the sake of completeness.
Let $(G, H, k)$ be an instance of \CFD. Assume that $F=\langle f_1,\ldots, f_r \rangle$, where $r \leq k$, is a normalized solution to the instance, and let $C$ be a component of ${\cal D}_F$.

 \begin{lemma} \label{lem:crossingcombo}
Let $f_i$ and $f_h$, where $i < h$ and $\epsilon(f_i) \not \sim \epsilon(f_h)$, be two flips in $C$ such that $i(\phi(f_h), \epsilon(f_i)) > 0$, and $\epsilon(f_i)$ is not flipped between $f_i$ and $f_h$. There is a directed path from $f_i$ to $f_h$ in $C$.
\end{lemma}

\begin{proof}
Assume that the statement is not true, and let $f_i$ and $f_h$ be the closest pair of flips in $C$ (in terms of the number of flips between $f_i$ and $f_h$) satisfying the conditions in the statement of the lemma and such that there is no directed path from $f_i$ to $f_h$ in $C$.

Consider the quadrilateral $Q_{\phi(f_h)}$ associated with $\phi(f_h)$ in ${\cal T}_h$. (See Fig.~\ref{fig:crossingcombo} for illustration.) Since $\epsilon(f_i) \not \sim \epsilon(f_h)$ and $i(\phi(f_h), \epsilon(f_i)) > 0$, $\epsilon(f_i)$ cannot be a diagonal or an edge of $Q_{\phi(f_h)}$: it is not a diagonal because
$\epsilon(f_i) \not \sim \epsilon(f_h)$ and $i(\phi(f_h), \epsilon(f_i)) > 0$, and it is not an edge because $i(\phi(f_h), \epsilon(f_i)) > 0$. Because $i(\phi(f_h), \epsilon(f_i)) > 0$, using the Jordan curve theorem it can be easily seen that at least one edge $e$ of $Q_{\phi(f_h)}$ satisfies $i(e, \epsilon(f_i)) > 0$. Therefore, there must exist a flip $f_p$, where $i \leq p < h$, such that $\phi(f_p)=e$. Since $e$ is an edge of $Q_{\phi(f_h)}$ which contains $\phi(f_h)$ as a diagonal in $G_h$, $e$ is also an edge of $Q_{\phi(f_h)}$ in $G_{h-1}$. This means that $\phi(f_p)$ and $\epsilon(f_h)$ share a face in $G_{h-1}$, and hence there is an arc from $f_p$ to $f_h$ (we can assume that $f_p$ is the last flip before $f_h$ such that $e = \phi(f_p)$). If $i = p$ then there is a path (of length 0) from $f_i$ to $f_p$; on the other hand, if $i < p$, then since $i(\phi(f_p), \epsilon(f_i)) > 0$, $p < h$, and $\epsilon(f_p) \neq \epsilon(f_i)$, by the way $f_h$ is chosen, there is a directed path from $f_i$ to $f_p$ in $C$. Therefore, there is a directed path from $f_i$ to $f_h$ in $C$ --- a contradiction. \qed
\end{proof}

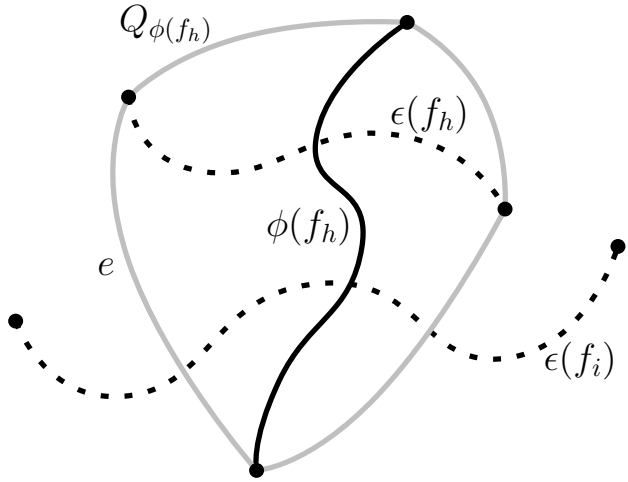
\begin{figure}
\centering
 \resizebox{0.7\textwidth}{!}{
\begin{tikzpicture}
[vertex/.style={fill,circle,inner sep = 2pt}]

\draw[line width=2,loosely dashed] plot [smooth, tension=1] coordinates {(-4, 0)  (-2.5, -1)  (0,0.5)   (2.5, -0.5) (4, 1)};

\draw[line width=2,loosely dashed] plot [smooth, tension=1] coordinates {(-2.5, 3)  (-1.5, 2)    (1, 2.5) (2.5, 1.5)};

\draw[line width=2] plot [smooth, tension=1] coordinates {  (1.2, 4)    (0, 2.5) (0.6, 1) (-0.5, -0.8) (-0.8, -2)};

\draw[lightgray, line width=2] plot [smooth, tension=1] coordinates {(1.2, 4)    (-1, 3.8) (-2.5, 3)};

\draw[lightgray, line width=2] plot [smooth, tension=1] coordinates {(1.2, 4)    (2.2, 3) (2.5, 1.5)};

\draw[lightgray, line width=2] plot [smooth, tension=1] coordinates {(-0.8, -2)    (0.8, -1)  (2.5, 1.5)};

\draw[lightgray, line width=2] plot [smooth, tension=1] coordinates {(-0.8, -2)     (-2.5, 0.8)   (-2.5, 3)};

\node (a) at (-4, 0) [vertex]{};
\node (b) at (4, 1) [vertex]{};
\node (c) at (-2.5, 3)  [vertex]{};
\node (d) at (2.5, 1.5) [vertex]{};
\node (e) at (1.2, 4)  [vertex]{};
\node (f) at (-0.8, -2) [vertex]{};

\node[label=above:{\Large $\epsilon(f_i)$}] at (3.5, -1) {};
\node[label=above:{\Large $\epsilon(f_h)$}] at (1.5, 2.3) {};
\node[label=above:{\Large $\phi(f_h)$}] at (-0.1, 0.8) {};
\node[label=above:{\Large $Q_{\phi(f_h)}$}] at (-2, 3.5) {};
\node[label=above:{\Large $e$}] at (-2.8, 0.4) {};

\end{tikzpicture}
}

\caption{Illustration of the proof of Lemma~\ref{lem:crossingcombo}.}
\label{fig:crossingcombo}
\end{figure}

The remaining lemmas, Lemma~\ref{lem:flip-restore}, Lemma~\ref{lem:flip-flip}, and Lemma~\ref{lem:path} hold true in the combinatorial setting after changing the edge equality in their statements to edge isotopy (see Lemma~\ref{lem:allpathcombinatorial} below), changing edge-crossing to the edge intersection number being greater than 0, and modifying their proofs accordingly. We skip the proofs to avoid repetition.

\begin{lemma}\label{lem:allpathcombinatorial}
Let $f_i$ and $f_h$, where $i < h$, be two flips in $C$. If one of the following conditions is true, then there is a directed path from $f_i$ to $f_h$ in $C$:
\begin{itemize}
\item[(1)] $i(\phi(f_h), \epsilon(f_i)) > 0$.
\item[(2)] $\phi(f_h) \sim \epsilon(f_i)$.
\item[(3)] $\epsilon(f_i) \sim \epsilon(f_h)$.
\item[(4)] $\phi(f_i) \sim \epsilon(f_h)$, or $\phi(f_i)$ (or an edge $e \sim \phi(f_i)$) and $\epsilon(f_h)$ share a triangle in $G_j$, for some $j$ satisfying $i \leq j < h$.
\end{itemize}
\end{lemma}

After the modifications described above, \algo \\ for {\sc Flip Distance} that performs a topological sorting of ${\cal D}_F$ can be adopted for {\sc Combinatorial Flip Distance}, albeit with some minor changes in the notation. We present the algorithm for {\sc Combinatorial Flip Distance}, denoted \algocomb, schematically in {\bf Algorithm 3}, for the sake of completeness. The algorithm takes an instance $(G, H, k)$ of {\sc Combinatorial Flip Distance}.
\renewcommand{\figurename}{Algorithm}
\setcounter{figure}{2}

\begin{algorithmnew}{The algorithm for {\sc Combinatorial Flip Distance}.}{alg:wholealgo}
\algtitle{\algocomb}
\begin{codeblock}
\step Order the changed edges in $G$ arbitrarily and denote this ordering by $\Gamma$.

\step For each $t=1, \ldots, k$, and for each tuple $(k_1, \ldots, k_t)$ satisfying $k_1+k_2+\cdots +k_t \leq k$ do:

\begin{itemize}
\step[2.1.]  Start a new search tree $\Upsilon$, where $G$ is the triangulated graph at the root.

\step[2.2.]  Perform $t$ branching stages as follows, where the branching in stage $\ell$, for $\ell =1$ to $t$, is performed starting from {\em each} leaf-node $\alpha$ of the search tree resulting at the end of stage $\ell-1$ (if $\ell = 1$ then there is only one leaf-node $\alpha$, which is the root of the search tree):
\begin{itemize}
\item[2.2.1.] Let $G_{\alpha}$ be the triangulated graph at node $\alpha$ in $\Upsilon$.
\item[2.2.2.]  Let $e$ be the first edge in $\Gamma$ with respect to the ordering such that $e$ is in $G_{\alpha}$ but not in $H$. Do the following:

 \begin{itemize}
 \item[2.2.2.1.] Associate the triplet $(G_{\alpha}, e, \Lambda=\emptyset)$ with the current node $\alpha$ of the search tree.
 \item[2.2.2.2.] Branch into all possible sequences ${\cal S}$ of actions of length at most $11k_{\ell}$, in which the number of flips is at most $k_{\ell}$. For each such sequence ${\cal S}$ do the following:
\begin{itemize}
\item[2.2.2.2.1.] Let $G'$ be the resulting triangulation after performing ${\cal S}$ starting at $\alpha$.
\item[2.2.2.2.2.] If $\ell =t$ and $G' = H$ then accept the instance and exit the (whole) algorithm.
\end{itemize}
\end{itemize}
\end{itemize}
\end{itemize}
\step Reject the instance.
\end{codeblock}
\end{algorithmnew}

\renewcommand{\figurename}{Fig.}
\setcounter{figure}{4}

We conclude with the following theorem:

\begin{theorem}\label{thm:final}
Given an instance $(G, H, k)$ of {\sc Combinatorial Flip Distance}, we can decide the instance correctly in time $\Oh(n+k\cdot c^{k}\cdot \lg{n})$, where $n=|V(G)|$.
\end{theorem}

\begin{proof}
First, note that an embedding of a triangulated graph can be computed in $\Oh(n)$ time~\cite{tarjanpt}, and we can determine the changed edges in $\Oh(n)$ time as well. The analysis of \algocomb~is the same as that for \algo, except for an $\Oh(\lg{n})$ multiplicative factor in the running time along each root-leaf path in the search tree. This multiplicative factor is for checking whether each of the at most $k$ flips performed along a path is admissible. Checking if flipping an edge $e$ is admissible amounts to checking whether the endpoints of the diagonal of $Q_e$ that is different from $e$ are adjacent or not. Brodal and Fagerberg~\cite{dynamic} describe a data structure for maintaining dynamic sparse graphs, which include planar graphs, that can answer adjacency queries in $\Oh(1)$ worst-case time per query, and that can perform insertions and deletions in amortized time $\Oh(1)$ and $\Oh(\lg{n})$, respectively, per operation. Using the data structure in~\cite{dynamic}, we can initially insert all edges of the graph into this data structure, which takes $\Oh(n)$ time. Afterwards, the running time along each root-leaf path in the search tree becomes $\Oh(k\cdot\lg{n})$. This is because along a root-leaf path in the search tree at most $k$ flips are performed, and the total running time along the path to test the admissibility of the $k$ flips is $\Oh(k)$, and for deleting the flipped edges and inserting the resulting edges is $\Oh(k\cdot\lg{n)}$. This completes the proof. \qed
%
\end{proof}

\section{Concluding remarks}\label{sec:conclusion}
We presented $\FPT$ algorithms for computing the flip distance between two triangulations, for several types of triangulations, that run in time $\Ostar(c^k)$, where $c \leq 2 \cdot 14^{11}$. Previously, only an $\Ostar(k^k)$-time $\FPT$ algorithm for the convex case was known~\cite{lucas}. There are several exciting research directions that ensue from our work:

\begin{itemize}
\item The running time of the presented algorithms is $\Ostar(c^k)$, where the constant $c \leq 2 \cdot 14^{11}$ is very large.  This constant can be improved at the expense of a more complicated analysis, but remains large. Whether an $\Ostar(c^k)$-time algorithm for a small $c$ exists is worth investigating.

\item The results in this paper imply exponential-size kernels for {\sc Flip Distance} and \CFD. The question of whether there is a polynomial-size kernel is challenging and remains open. (Recall that a kernel of size $2k$ was given by Lucas~\cite{lucas} for \FD~ in the convex case.)

\item The classical complexity of the problem of computing the flip distance between two triangulations remains open for triangulated graphs, labeled triangulated graphs, and triangulations of a convex polygon. Resolving the complexity of this problem for any of the aforementioned triangulation types, and in particular for triangulations of a convex polygon, is a very challenging open problem. We note that the problem of computing the flip distance between two triangulations of a point set and that of computing the flip distance between two labeled triangulated graphs, share that there is an identification between the vertices of the two triangulations, but differ in the requirement for an admissible flip. On the other hand, the problem of computing the flip distance between labeled triangulated graphs and that between triangulated graphs share the requirement for an admissible flip, but differ in that in the first problem an identification/bijection between the vertices is known whereas in the second problem it is not. Therefore, the three aforementioned problems exhibit three shades of variations, with the problem of computing the flip distance between labeled triangulated graphs occupying an intermediate position.

\item Despite that deciding whether two (unlabeled) triangulated graphs are isomorphic can be done in linear time~\cite{hopcroft}, we could not extend the results in this paper to triangulated graphs. The main obstacle was determining the ``changed'' edges. Investigating whether the same approach works for triangulated graphs is very interesting.
\end{itemize}

The results presented in this paper, and in particular, the derived representation and properties of a solution to a problem instance, reveal a lot of the structural properties of the flip distance problem. We hope that these results will shed more light on the problem that will help resolve some of the open questions described above.

 Finally, we note that {\sc Flip Distance} and \CFD~fall broadly into the category of reconfiguration problems, for which several parameterized complexity results have recently appeared  (see~\cite{mouadetal2,mouadetal1,mouadetal3}).

\bibliographystyle{plain}
\bibliography{ref}

\end{document}